\pdfoutput=1
\documentclass[11pt]{article}

\usepackage[margin=1in]{geometry}
\usepackage{amsmath,amssymb,amsthm}
\usepackage{mathtools}
\usepackage[authoryear,round]{natbib}
\usepackage{setspace}
\usepackage{enumitem}
\usepackage[hidelinks]{hyperref}
\usepackage{microtype}
\usepackage{pgfplots}
\pgfplotsset{compat=1.16}
\usetikzlibrary{arrows.meta,patterns}
\pgfplotsset{calibfig/.style={
  width=0.46\textwidth, height=4.4cm, axis on top,
  enlargelimits=false, tick align=outside,
  scaled y ticks=false, scaled x ticks=false,
  tick label style={font=\footnotesize},
  label style={font=\footnotesize},
  every axis plot/.append style={line width=0.8pt},
}}

\newtheorem{proposition}{Proposition}
\newtheorem{lemma}{Lemma}
\newtheorem{corollary}{Corollary}
\theoremstyle{definition}

\theoremstyle{remark}
\newtheorem{remark}{Remark}

\newcommand{\ind}{\mathbf{1}}
\newcommand{\Dl}{D_{L}}
\newcommand{\Dh}{D_{H}}

\DeclareMathOperator{\E}{\mathbb{E}}
\newcommand{\SEP}{(\textnormal{SEP})}

\title{\textbf{The Accommodation Trap:\\ Survival Dependence, Communication, and the\\ Allocation of Invisible Work}}
\author{Samiha Tariq\\[4pt]
\normalsize School of Analytics, Finance, and Economics\\
\normalsize Southern Illinois University Carbondale\\
\normalsize Carbondale, Illinois, USA}
\date{\today}

\begin{document}
\maketitle

\begin{abstract}
\noindent This paper develops a theory of \emph{accommodation traps} in workplace hierarchies. Workers with weak outside options face a higher relational cost of appearing unavailable, resistant, or difficult, and so adopt accommodative communication (extra deference, softened boundaries, visible flexibility) to preserve the employment relationship. This short-run strategy is also an informative signal: a friction-minimizing supervisor rationally infers that an accommodating worker is less likely to resist, and assigns that worker the invisible, low-promotability work. The signal harms the worker who sends it. Unlike agents in ratchet-effect models, who conceal information that would invite heavier demands, dependent workers reveal their low refusal cost, because the relational value of accommodating rises with dependence. The resulting inequality requires no productivity difference and no taste for discrimination, and the supervisor's beliefs are correct, so correcting beliefs cannot remove it. When visible work builds outside options and invisible work erodes them, the allocation becomes self-reinforcing, and even workers who start out identical are locked into unequal roles with probability one. The equilibrium is privately rational for every party yet socially inefficient whenever diminishing returns to outside options outweigh the supervisor's friction cost. Attaching recognition to invisible work, rotating it by rule, or limiting managerial discretion each weakens or breaks the link from communication to assignment that sustains the trap.

\medskip
\noindent\textbf{Keywords:} invisible work, non-promotable tasks, outside options, signaling, task allocation, relational contracts.

\noindent\textbf{JEL classification:} C72, D82, D83, J24, M51, M54.
\end{abstract}

\section{Introduction}

Workers do not enter the workplace with equal fallback options. Some can decline a request, set a boundary, or keep a reply short because they have savings, a partner's income, secure immigration status, or a strong outside offer. Others cannot easily risk displeasing a supervisor: their job is tied to funding, a visa, health insurance, a degree, or basic solvency. This paper asks whether that asymmetry in \emph{dependence} shapes how workers communicate, and whether communication in turn shapes which worker is handed the organization's invisible work.

The motivating observation is a familiar workplace asymmetry. A worker who depends heavily on a job tends to communicate differently from one with strong outside options: more accommodating, more apologetic, quicker to agree, slower to refuse. This is individually rational: it protects the relationship the worker cannot afford to lose. But the same behavior is observable, and a supervisor who must get unglamorous work done can learn from it. An accommodating worker looks like a low-friction worker: easier to ask, less likely to push back, cheaper to assign. The worker's strategy for short-run security can therefore expose them to long-run disadvantage. This is the \emph{accommodation trap}.

The phenomenon of unequal invisible-work allocation is documented. \citet{babcock2017} show that women volunteer for, are asked to do, and accept low-promotability tasks more than men, with beliefs about who will say ``yes'' a central driver. \citet{baewoolley} show that managers pile extra work onto intrinsically motivated employees, mediated by a na\"ive belief that such workers will enjoy the extra work. The contribution of this paper is not to re-establish that some workers absorb more invisible work. It is to identify a distinct \emph{mechanism} that explains who becomes the predictable recipient of that work and why the pattern persists; the specific contributions are listed at the end of this section.

\paragraph{What the model says.} Two workers differ in privately known dependence. Each chooses a communication style, bounded or accommodative, before a supervisor allocates one visible (career-building) task and one invisible (necessary, low-recognition) task. Two primitives do the work, and both flow from a single idea: \emph{the cost of friction with the supervisor rises with dependence}. First, seeming ``difficult'' is more dangerous the more a worker depends on the relationship, so accommodation yields a relational benefit increasing in dependence. Second, a dependent worker assigned invisible work is less likely to resist, because refusal is costly to them; so they are the low-friction target for that work. The supervisor, minimizing expected friction, wants to place invisible work on the more dependent worker but cannot observe dependence directly.

Communication closes that gap. In the separating equilibrium characterized below, dependent workers accommodate and independent workers do not; the supervisor correctly reads accommodation as a marker of dependence and assigns invisible work accordingly. The worker's propensity to accommodate is increasing in dependence (Proposition~\ref{prop:dep}), and invisible work flows to the accommodator (Proposition~\ref{prop:alloc}). A single threshold in dependence separates the two communication styles, and Proposition~\ref{prop:regimes} maps the parameter regions in which the economy sorts (the trap), pools on boundedness (no accommodation), or pools on accommodation.

Embedding the stage game in time, with visible work raising future outside options and invisible work eroding them, the allocation becomes self-reinforcing: the dependence gap widens each period and the configuration is absorbing (Proposition~\ref{prop:trap}). This is the trap proper: a transitory difference becomes permanent. Although every party behaves rationally and no output is lost statically, the allocation is inefficient whenever diminishing returns to outside options outweigh the supervisor's friction cost, because the supervisor internalizes friction but not the worker's welfare from human-capital accumulation (Proposition~\ref{prop:welfare}). Finally, three interventions break the trap by severing the signal-to-allocation link: attaching recognition or promotability to invisible work (Proposition~\ref{prop:recognition}), and rotating invisible work by rule or otherwise limiting discretion (Proposition~\ref{prop:rotation}). Figure~\ref{fig:loop} traces the full loop and marks where each result enters.

\begin{figure}[t]
\centering
\begin{tikzpicture}[
  font=\footnotesize,
  stage/.style={draw, rounded corners=2pt, line width=0.5pt, align=flush center,
    text width=4.3cm, minimum height=1.5cm, inner sep=4pt},
  flow/.style={-{Stealth[length=2.4mm]}, line width=0.7pt},
  lab/.style={font=\scriptsize, align=center, inner sep=2pt}]
\node[stage] (dep) at (0,0)      {\textbf{Dependence}\\ weak outside option, high $D_i$};
\node[stage] (acc) at (9.4,0)    {\textbf{Accommodative communication}\\ the worker sends $m_i=A$};
\node[stage] (asg) at (9.4,-3.7) {\textbf{Inference and assignment}\\ the supervisor reads $A$ as low refusal risk and assigns the invisible task $N$};
\node[stage] (ero) at (0,-3.7)   {{\bfseries Erosion of the\\ outside option}\\ visible work is crowded out:\\ $D_{i,t+1}=D_{i,t}+\gamma$};
\draw[flow] (dep) -- (acc);
\draw[flow] (acc) -- (asg);
\draw[flow] (asg) -- (ero);
\draw[flow] (ero) -- (dep);
\node[lab, anchor=south] at (4.7,0.08) {accommodation pays: $\rho D_i\ge\Delta/2$\\ (Proposition~\ref{prop:dep})};
\node[lab, anchor=north] at (4.7,-3.78) {law of motion \eqref{eq:lom}};
\node[lab, anchor=east, align=right] at (9.1,-1.85) {Bayes' rule: $A\Rightarrow\Dh$;\\ friction minimization\\ (Lemma~\ref{lem:super}, Proposition~\ref{prop:alloc})};
\node[lab, anchor=west, align=left] at (0.3,-1.85) {the gap widens by $\alpha+\gamma$\\ per period, and the\\ configuration is absorbing\\ (Proposition~\ref{prop:trap})};
\draw[line width=0.6pt] (9.22,-1.93) -- (9.58,-1.77) (9.22,-2.05) -- (9.58,-1.89);
\node[lab, anchor=west, align=left, text=black!70] at (9.72,-1.9) {correctives sever\\ this link\\ (Propositions~\ref{prop:recognition} and~\ref{prop:rotation})};
\end{tikzpicture}
\caption{The accommodation trap. Dependence makes accommodation privately optimal (Proposition~\ref{prop:dep}); the supervisor correctly reads accommodation as a marker of dependence and assigns the invisible task to the accommodator (Lemma~\ref{lem:super}, Proposition~\ref{prop:alloc}); invisible work crowds out visible, career-building work and raises dependence \eqref{eq:lom}, which closes the loop (Proposition~\ref{prop:trap}). Recognition of invisible work (Proposition~\ref{prop:recognition}) and rule-based rotation or reduced discretion (Proposition~\ref{prop:rotation}) break the loop by severing the link from communication to assignment; recognition also weakens the erosion step by reducing $\gamma$.}
\label{fig:loop}
\end{figure}

\paragraph{Contributions.} Relative to these literatures, the paper makes four contributions.
\begin{enumerate}[label=(\arabic*),itemsep=2pt]
\item \emph{A new channel.} How a worker communicates tells the supervisor how costly refusing would be for that worker. The operative trait is dependence on the job, not gender \citep{babcock2017}, intrinsic motivation \citep{baewoolley}, or a supervisor's bias, and the informativeness of communication is an equilibrium object rather than an assumption (Propositions~\ref{prop:dep} and~\ref{prop:alloc}).
\item \emph{A signal that harms its sender.} In ratchet-effect models \citep{fgt1985,laffonttirole1988}, an agent whose low cost would invite heavier future demands has an incentive to conceal it. Here dependent workers reveal their low refusal cost even though doing so attracts the invisible work, because the relational value of accommodating, $\rho D$, rises with dependence while the added risk of invisible work, $\Delta/2$, does not. The same trait that makes accommodation privately valuable makes the accommodator the supervisor's cheapest target (Lemma~\ref{lem:super}). This self-harming separation does not appear to have been formalized for the allocation of invisible work.
\item \emph{Inequality without bias.} The supervisor has no taste for discrimination and her beliefs are correct in equilibrium, so interventions that work by correcting beliefs have nothing to correct. This separates the mechanism from accounts in which a managerial misperception \citep{baewoolley} or a subtle bias \citep{pikulinaferreira2026} drives the gap, and it directs policy toward incentives and discretion: recognition of invisible work, rule-based rotation, and limits on discretion (Propositions~\ref{prop:recognition} and~\ref{prop:rotation}).
\item \emph{Lock-in from equal starts.} Because visible work builds outside options and invisible work erodes them, the allocation entrenches itself (Proposition~\ref{prop:trap}), and even workers who start out identical are locked into unequal roles with probability one (Proposition~\ref{prop:lockin}). The logic echoes the self-confirming stereotypes of \citet{coateloury1993}, but it operates worker by worker, without group identity or an investment decision.
\end{enumerate}

\paragraph{Roadmap.} Section~\ref{sec:lit} positions the paper. Section~\ref{sec:model} sets up the stage game. Section~\ref{sec:equilibrium} characterizes equilibrium and comparative statics. Section~\ref{sec:dynamics} develops the dynamic trap. Section~\ref{sec:welfare} treats welfare and correctives, and Section~\ref{sec:calibration} gives a numerical illustration. Section~\ref{sec:discussion} discusses scope and interpretation, Section~\ref{sec:empirics} sketches empirical designs, and Section~\ref{sec:conclusion} concludes.

\section{Related Literature}\label{sec:lit}

The accommodation trap sits where three lines of research meet: work on who performs unrewarded tasks, work on how supervisors learn about the people they manage, and work on why differences between workers harden into lasting inequality. Each explains part of the pattern studied here. Read together, they point to a missing link: a single source of heterogeneity that shapes what a worker says, what the supervisor infers, and how the resulting allocation feeds back on the worker's position. The discussion below is organized around these three questions.

\paragraph{Who does the unrewarded work.} Since \citet{daniels1987} drew attention to invisible work, the question has shifted from whether such work exists to who is expected to do it. Two recent answers share a common structure. \citet{babcock2017} show that women are asked more often, volunteer more often, and accept more low-promotability tasks than men, driven in part by shared beliefs that women will say yes. \citet{baewoolley} show that managers give extra tasks to employees they see as intrinsically motivated, because they na\"ively believe these employees will enjoy the extra work. In both accounts the allocation follows a belief about how readily a worker will accept a burden, and the belief attaches to a visible trait: gender in one case, apparent motivation in the other. Neither account explains where such a belief comes from when no trait is visible, and in the second it is explicitly a misperception. Labor economics suggests an economic source for the willingness to accept: workers' outside options differ, and they matter for pay \citep{caldwelldanieli}. This paper connects the two literatures. Outside options shape not only what a worker can bargain for but also how the worker communicates, and communication shapes which worker the supervisor asks; the belief that drives the allocation is generated by the worker's own behavior and is correct in equilibrium. The question also differs from the organizational-design literature on task assignment, which asks how tasks should be grouped into jobs or units \citep{puschke2009} rather than which individual receives a given unrewarded task.

\paragraph{What supervisors learn from workers.} If allocation depends on beliefs, the next question is how those beliefs form. The classic answers have the supervisor learn from an observable proxy \citep{phelps1972} or from information the firm gathers itself, for example by testing workers before assigning them \citep{maluegxu1997}. When workers can shape the information, the answer depends on what revealing costs them. In \citet{spence1973}, revealing a high type pays, so high types signal. In the ratchet-effect literature, revealing a low cost invites more demanding terms later, so efficient agents conceal it and equilibria involve substantial pooling \citep{fgt1985,laffonttirole1988}. Communication inside hierarchies adds its own distortions: subordinates who are evaluated subjectively tell supervisors what the supervisors already believe \citep{prendergast1993}, manage impressions to win favorable ratings \citep{wayneliden1995}, or stay silent about problems \citep{morrisonmilliken2000}. Accommodation occupies an unusual position in this landscape. Like the ratchet agent, the accommodating worker reveals something that invites a heavier load. Unlike the ratchet agent, the worker reveals it anyway, because the relational return to accommodating, the value of keeping an informal relationship in good standing \citep{bgm2002}, rises with the very dependence it reveals. And unlike the ingratiating or silent subordinate, whose communication distorts information, the accommodating worker's message is informative in equilibrium. The outcome is a separating equilibrium in which the signal harms its sender. The term \emph{accommodation signaling} refers here to communication of this kind, which lowers a supervisor's perceived risk of refusal and of friction.

\paragraph{How gaps persist.} A third literature asks why differences between workers arise and last. One family of answers traces unequal treatment to preferences or bias: supervisors favor some subordinates over others \citep{prendergasttopel}, relational feelings between supervisors and workers distort evaluations \citep{dugheramarciano2022}, and subtle, hard-to-verify biases in promotion decisions generate large gaps in skills and promotion outcomes \citep{pikulinaferreira2026}. A second family shows that inequality can persist without bias, because beliefs become self-confirming: in \citet{coateloury1993}, employers' beliefs about groups shape workers' incentives to invest, so groups that are identical ex ante can end up unequal ex post, with employers' beliefs about them correct. The dynamic result of this paper belongs to the second family but works through a different channel. There is no bias and no investment decision; persistence runs through the allocation of work itself, which erodes the outside options of the worker who receives the invisible task and so reproduces the dependence that led to it. Because the feedback operates worker by worker, it needs no group identity, and it locks in even workers who start out identical. Patience matters here as it does in implicit contracting, where a lower discount factor can move play from pooling to separation \citep{gurtler2008}: the short horizon of a dependent worker helps keep the trap in place (Section~\ref{sub:fl}).

\paragraph{Where the paper fits.} Taken together, these literatures explain why unrewarded work may follow visible traits, how supervisors may learn about workers, and why gaps may persist through bias or self-confirming beliefs, but they largely treat these as separate problems. The accommodation trap joins them through one primitive: dependence raises the cost of friction with the supervisor. That single force makes accommodation privately valuable, makes the accommodating worker the cheapest person to burden, and, through the allocation of work, reproduces itself. The contributions listed in the Introduction follow from this link.

\section{The Model}\label{sec:model}

\paragraph{Players and types.} A supervisor and two workers, $i\in\{1,2\}$. Worker $i$ privately observes a \emph{dependence} type $D_i\in\{\Dl,\Dh\}$ with $0<\Dl<\Dh$. Types are independent across workers with $\Pr(D_i=\Dh)=\pi\in(0,1)$. Dependence is the inverse of the outside option: a higher $D_i$ means a weaker fallback and more to lose if the relationship sours. The supervisor knows only the distribution.

\paragraph{Tasks.} Two tasks must be done and are allocated one to each worker: a \emph{visible} task $V$ (career-building, recognized, promotion-relevant) and an \emph{invisible} task $N$ (necessary, routine, low-recognition). Thus an allocation is a bijection: one worker does $V$, the other does $N$.

\paragraph{Worker task payoffs.} A worker obtains $w_V$ from $V$ and $w_N$ from $N$, with
\[
\Delta \;\equiv\; w_V-w_N \;>\;0 ,
\]
the worker's value of receiving the visible rather than the invisible task.

\paragraph{Communication.} Before allocation, workers simultaneously choose a communication style $m_i\in\{A,B\}$: \emph{accommodative} ($A$) or \emph{bounded} ($B$). Accommodation bundles the behaviors of the dependent worker: deference, gratitude, softened boundaries, visible flexibility. Boundedness is concise and boundary-setting.

\paragraph{Relational payoff of communication.} Setting boundaries risks the supervisor perceiving a worker as difficult, which carries an expected relational loss (a worse reference, lower renewal probability, withdrawn goodwill) that bites in proportion to how much the worker depends on the job. Accommodation neutralizes this. In reduced form, relative to $B$, choosing $A$ yields worker $i$ a relational benefit
\[
g(D_i)\;=\;\rho\,D_i,\qquad \rho>0,
\]
increasing in dependence and independent of the task received. (Remark~\ref{rem:microfound} discusses micro-foundation.)

\paragraph{Supervisor payoff: friction.} Both tasks are always completed, so output does not vary with the allocation. What varies is friction. A worker assigned the invisible task may resist (push back, comply grudgingly, or refuse), imposing friction cost $\phi>0$ on the supervisor. The probability of resistance is $r(D_i)$, \emph{strictly decreasing} in dependence:
\[
r(\Dl)=r_L \;>\; r(\Dh)=r_H\;\ge 0 .
\]
A more dependent worker resists less, because refusal endangers the relationship they cannot afford to lose. Note that $g$ increasing and $r$ decreasing flow from the same primitive: \emph{dependence raises the cost of friction with the supervisor}. The supervisor minimizes expected friction; they therefore prefer to place $N$ on the worker they believe more dependent.

\paragraph{Timing.}
\begin{enumerate}[label=(\arabic*),itemsep=1pt]
  \item Nature draws $(D_1,D_2)$, privately observed.
  \item Workers simultaneously choose $(m_1,m_2)$.
  \item The supervisor observes $(m_1,m_2)$, forms beliefs, and allocates $N$ to one worker and $V$ to the other to minimize expected friction.
  \item Payoffs realize.
\end{enumerate}
\noindent Figure~\ref{fig:timing} depicts this sequence, together with the state update used in the dynamic extension of Section~\ref{sec:dynamics}.

\begin{figure}[t]
\centering
\begin{tikzpicture}[font=\footnotesize,
  desc/.style={align=flush center, text width=3.6cm, anchor=north, inner sep=2pt},
  stg/.style={font=\scriptsize\bfseries, anchor=south}]
\draw[-{Stealth[length=2.4mm]}, line width=0.7pt] (0,0) -- (15.6,0);
\foreach \x/\s in {1.8/1, 5.8/2, 9.8/3, 13.8/4} {
  \draw[line width=0.7pt] (\x,-0.12) -- (\x,0.12);
  \node[stg] at (\x,0.16) {Stage \s};
}
\node[desc] at (1.8,-0.24) {Nature draws $(D_1,D_2)$; each worker privately observes own dependence $D_i\in\{\Dl,\Dh\}$};
\node[desc] at (5.8,-0.24) {Workers simultaneously choose messages $m_i\in\{A,B\}$};
\node[desc] at (9.8,-0.24) {The supervisor observes $(m_1,m_2)$, updates beliefs, and assigns $N$ to one worker, $V$ to the other};
\node[desc] at (13.8,-0.24) {Payoffs: task values; $\rho D_i$ if $m_i=A$; friction $\phi$ if the $N$-worker resists};
\draw[-{Stealth[length=2.2mm]}, dashed, line width=0.5pt, black!55] (13.8,-2.0) -- (13.8,-2.3) -- (1.8,-2.3) -- (1.8,-2.0);
\node[font=\scriptsize, text=black!75, anchor=north] at (7.8,-2.36) {dynamic extension (Section~\ref{sec:dynamics}): $D_{i,t+1}=D_{i,t}+\gamma\,\ind\{N\}-\alpha\,\ind\{V\}$, and the stage game repeats};
\end{tikzpicture}
\caption{Timing of the stage game. In the dynamic model, each worker's dependence is updated by the task received, according to \eqref{eq:lom}, before the next period's stage game.}
\label{fig:timing}
\end{figure}

\paragraph{Strategies and equilibrium.} A worker strategy is $\sigma_i:\{\Dl,\Dh\}\to\Delta\{A,B\}$. A supervisor strategy is an allocation rule $a:\{A,B\}^2\to[0,1]$, where $a(m_1,m_2)$ is the probability that worker~$1$ receives $N$. Beliefs $\mu$ assign, to each observed message, a posterior over the sender's type. The solution concept is Perfect Bayesian Equilibrium (PBE): workers and supervisor are sequentially rational, and $\mu$ is derived from strategies by Bayes' rule wherever possible.

\begin{remark}\label{rem:condition}
The supervisor has a strict motive to sort only if $r_L>r_H$. If $r_L=r_H$, friction is independent of dependence, the supervisor is indifferent over which worker does $N$, and communication carries no allocative consequence. The analysis assumes $r_L>r_H$ throughout; this is the natural case and is precisely what makes dependence allocatively relevant.
\end{remark}

\section{Equilibrium}\label{sec:equilibrium}

This section first records the supervisor's behavior, then the signal-induced assignment probabilities, and then characterizes equilibrium.

\begin{lemma}[Friction-minimizing allocation]\label{lem:super}
Fix the supervisor's posterior beliefs. The supervisor assigns $N$ to the worker with the higher posterior probability of being type $\Dh$ (equivalently, the lower posterior $\E[r(D)]$), breaking ties by a fair coin. In particular, under the conjectured separating play $\{\Dh\mapsto A,\ \Dl\mapsto B\}$:
\[
a(A,B)=1,\qquad a(B,A)=0,\qquad a(A,A)=a(B,B)=\tfrac12 .
\]
\end{lemma}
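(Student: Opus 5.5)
The plan is to write the supervisor's expected friction under each of the two feasible allocations and compare them. Fix posteriors $p_1,p_2$, where $p_i=\mu(\Dh\mid m_i)$ is the posterior probability that worker $i$ is type $\Dh$. Types are independent and each message depends only on the sender's own type, so the posterior about worker $i$ depends only on $m_i$. If worker $i$ is assigned $N$, expected friction is
\[
\phi\,\E[r(D_i)\mid m_i]=\phi\bigl(p_i r_H+(1-p_i)r_L\bigr)=\phi\bigl(r_L-p_i(r_L-r_H)\bigr).
\]
Each feasible allocation puts $N$ on exactly one worker, so the supervisor's problem reduces to choosing $i$ to minimize this expression. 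Since $\phi>0$ and $r_L>r_H$ (Remark~\ref{rem:condition}), the expression is strictly decreasing in $p_i$. This gives the two equivalent characterizations in the statement: the supervisor should assign $N$ to the worker with the larger $p_i$, which is the same as the worker with the smaller posterior $\E[r(D)]$. When $p_1=p_2$, the supervisor is indifferent, and any randomization is sequentially rational. The fair coin is the tie-breaking convention the lemma adopts; it is also the rule that treats two workers sending identical messages symmetrically.

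The second step applies Bayes' rule under the conjectured separating profile $\{\Dh\mapsto A,\ \Dl\mapsto B\}$. Because $\pi\in(0,1)$, both messages are sent with positive probability. Bayes' rule therefore pins down $\mu(\Dh\mid A)=1$ and $\mu(\Dh\mid B)=0$, and no off-path beliefs need to be specified.

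Substituting these posteriors gives the allocation rule:
\begin{itemize}
\item For $(A,B)$: $p_1=1>0=p_2$, so $a(A,B)=1$.
\item For $(B,A)$: the comparison reverses, so $a(B,A)=0$.
\item For $(A,A)$ and $(B,B)$: $p_1=p_2$, so the tie-break gives $a=\tfrac12$.
\end{itemize}

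The only delicate point is the tie case. Sequential rationality alone does not force $\tfrac12$, so I would state explicitly that the fair coin is a selection among the supervisor's best responses. The rest of the argument is a one-line comparison, and it relies on $r_L>r_H$ and $\phi>0$ holding strictly.
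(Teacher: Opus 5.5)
Your proposal is correct and follows essentially the same route as the paper: write expected friction from assigning $N$ to worker $i$ as $\phi\,\E[r(D_i)\mid m_i]$, note it is strictly decreasing in the posterior on $\Dh$ because $r_L>r_H$, and substitute the Bayesian posteriors $A\Rightarrow\Dh$, $B\Rightarrow\Dl$ from the separating profile. Your explicit linear formula and your observation that the fair coin is a selection among best responses, rather than something sequential rationality forces, are precise additions that the paper's shorter proof leaves implicit.
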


\begin{proof}
Expected friction from assigning $N$ to worker $i$ is $\phi\,\E[r(D_i)\mid m_i]$. Since $r$ is strictly decreasing, $\E[r(D)\mid m]$ is decreasing in the posterior probability of $\Dh$. Assigning $V$ generates no friction. The supervisor therefore places $N$ on the worker with the higher posterior $\Pr(\Dh\mid m)$ and is indifferent under equal posteriors. Under separating beliefs, $A\Rightarrow\Dh$ and $B\Rightarrow\Dl$; with $r_H<r_L$ the stated rule follows.
\end{proof}

\begin{lemma}[Signal-induced assignment probabilities]\label{lem:probs}
Suppose worker $i$'s opponent plays the separating strategy (sends $A$ iff type $\Dh$, so sends $A$ with probability $\pi$), and the supervisor allocates as in Lemma~\ref{lem:super}. Then the probability that worker $i$ is assigned the invisible task is
\[
\Pr(N\mid m_i=A)=1-\tfrac{\pi}{2},
\qquad
\Pr(N\mid m_i=B)=\tfrac{1-\pi}{2},
\]
so that sending $A$ raises the probability of invisible work by
\[
q\;\equiv\;\Pr(N\mid A)-\Pr(N\mid B)\;=\;\tfrac12 .
\]
\end{lemma}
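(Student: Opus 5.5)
The plan is to condition on the opponent's message and use the allocation rule of Lemma~\ref{lem:super} directly. By independence of types across workers, the opponent $j$ sends $A$ with probability $\pi$ and $B$ with probability $1-\pi$, regardless of worker $i$'s own type or message. The supervisor's rule under the conjectured separating beliefs depends only on the message pair, so worker $i$'s assignment probability is a two-term mixture over the opponent's message. Own type enters only through the message, so the resulting probabilities are the same whichever type sends the message.

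First, take $m_i=A$. If $m_j=B$, Lemma~\ref{lem:super} gives worker $i$ task $N$ with probability one, since $a(A,B)=1$ and symmetrically for either labeling of the workers. If $m_j=A$, the tie is broken by a fair coin, giving probability $\tfrac12$. Hence
\[
\Pr(N\mid A)=(1-\pi)\cdot 1+\pi\cdot\tfrac12=1-\tfrac{\pi}{2}.
\]
Second, take $m_i=B$. If $m_j=A$, worker $i$ gets $N$ with probability zero. If $m_j=B$, the probability is $\tfrac12$. Hence
\[
\Pr(N\mid B)=\pi\cdot 0+(1-\pi)\cdot\tfrac12=\tfrac{1-\pi}{2}.
\]
Subtracting the two expressions gives $q=1-\tfrac{\pi}{2}-\tfrac{1-\pi}{2}=\tfrac12$, independent of $\pi$.

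The only point needing care is the off-path or deviation case: when worker $i$ deviates (e.g.\ a $\Dl$ type sending $A$), the supervisor still applies the separating-belief rule to the message pair. Both messages are on path under separation because $\pi\in(0,1)$, so Bayes' rule pins beliefs down for every message pair and no off-path belief specification is needed. I also need to check that Lemma~\ref{lem:super}'s rule is stated for worker~1 and transfers to worker~2 by symmetry, so that $a(B,A)=0$ means worker~2 receives $N$. Beyond that, the argument is routine arithmetic.
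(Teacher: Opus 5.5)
Your proposal is correct and follows essentially the same argument as the paper. You condition on the opponent's message (which is $A$ with probability $\pi$), read off the assignment from Lemma~\ref{lem:super} in each of the four message profiles, and subtract to get $q=\tfrac12$; your side remarks on independence of types and on every message pair being on path under separation are accurate and match the paper's treatment in Proposition~\ref{prop:dep}.
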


\begin{proof}
Condition on the opponent's message, which is $A$ with probability $\pi$ and $B$ with probability $1-\pi$. If worker $i$ sends $A$: against an $A$-opponent the profile is $(A,A)$, giving $i$ the task $N$ with probability $\tfrac12$; against a $B$-opponent the profile is $(A,B)$, giving $i$ task $N$ with probability $1$. Hence $\Pr(N\mid A)=\pi\cdot\tfrac12+(1-\pi)\cdot1=1-\tfrac{\pi}{2}$. If worker $i$ sends $B$: against an $A$-opponent the profile is $(B,A)$, giving $i$ task $N$ with probability $0$; against a $B$-opponent the profile is $(B,B)$, probability $\tfrac12$. Hence $\Pr(N\mid B)=\pi\cdot0+(1-\pi)\cdot\tfrac12=\tfrac{1-\pi}{2}$. Subtracting gives $q=\tfrac12$.
\end{proof}

\noindent The clean value $q=\tfrac12$ is an artifact of the symmetric two-worker, binary-type structure; what matters for the results is $q>0$ (Remark~\ref{rem:general}).

\subsection{The separating equilibrium}

A worker of type $D$ who sends $A$ obtains the relational benefit $\rho D$ and expected task payoff $w_V-\Delta\,\Pr(N\mid A)$; sending $B$ forgoes the relational benefit and gives $w_V-\Delta\,\Pr(N\mid B)$. The net gain from accommodating is
\begin{equation}\label{eq:netgain}
\Phi(D)\;\equiv\;\underbrace{\rho D}_{\text{relational benefit}}\;-\;\underbrace{\Delta\, q}_{\text{extra invisible-work risk}}
\;=\;\rho D-\tfrac{\Delta}{2}.
\end{equation}
Because $\Phi$ is strictly increasing in $D$, accommodation obeys a threshold rule: worker accommodates iff $D\ge D^{*}$, where
\begin{equation}\label{eq:threshold}
D^{*}\;\equiv\;\frac{\Delta\,q}{\rho}\;=\;\frac{\Delta}{2\rho}.
\end{equation}

\begin{proposition}[Dependence induces accommodation]\label{prop:dep}
Suppose
\begin{equation*}
\rho\,\Dl<\tfrac{\Delta}{2}\le\rho\,\Dh,
\qquad\text{equivalently}\qquad
\Dl<D^{*}\le\Dh. \tag*{\SEP}
\end{equation*}
Then there is a Perfect Bayesian Equilibrium in which type $\Dh$ accommodates and type $\Dl$ is bounded, the supervisor allocates as in Lemma~\ref{lem:super}, and beliefs are $A\Rightarrow\Dh$, $B\Rightarrow\Dl$. The propensity to accommodate $\Phi(D)=\rho D-\tfrac{\Delta}{2}$ is strictly increasing in dependence.
\end{proposition}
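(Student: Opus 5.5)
The argument is a direct verification of the three PBE requirements for the conjectured separating profile $\{\Dh\mapsto A,\ \Dl\mapsto B\}$, followed by a one-line observation for the monotonicity claim.

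\emph{Step 1: beliefs.} Under the conjectured strategies both messages are sent with positive probability, because $\pi\in(0,1)$. Bayes' rule therefore pins down beliefs on the whole message space and gives $\Pr(\Dh\mid A)=1$ and $\Pr(\Dh\mid B)=0$. No off-path beliefs need to be specified. This is convenient: the separating equilibrium does not rely on any refinement or on an arbitrary choice of off-path beliefs.

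\emph{Step 2: supervisor optimality.} Given these posteriors, Lemma~\ref{lem:super} applies directly. With $r_L>r_H$, as maintained in Remark~\ref{rem:condition}, the friction-minimizing rule is $a(A,B)=1$, $a(B,A)=0$ and $a(A,A)=a(B,B)=\tfrac12$. Under equal posteriors the supervisor is indifferent, so the fair coin is a best response.

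\emph{Step 3: worker optimality.} This is the only step with content. Fix worker $i$ and hold the opponent to the separating strategy and the supervisor to the rule above. Lemma~\ref{lem:probs} gives the assignment probabilities $\Pr(N\mid A)=1-\pi/2$ and $\Pr(N\mid B)=(1-\pi)/2$, so the difference is $q=\tfrac12$. One point needs care: these probabilities do not depend on worker $i$'s own type. That holds because the supervisor conditions only on messages, and types are independent, so the opponent's message distribution is the same whichever type $i$ has. Consequently the payoff difference between $A$ and $B$ for type $D$ is exactly $\Phi(D)=\rho D-\Delta/2$ from \eqref{eq:netgain}.

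Condition (SEP) then gives the two incentive constraints:
\[
\Phi(\Dh)=\rho\Dh-\tfrac{\Delta}{2}\ge 0
\qquad\text{and}\qquad
\Phi(\Dl)=\rho\Dl-\tfrac{\Delta}{2}<0 .
\]
The first says $\Dh$ weakly prefers $A$; the second says $\Dl$ strictly prefers $B$. Neither type has a profitable deviation, including to mixed messages, since payoffs are linear in the mixing probability. At the boundary $\rho\Dh=\Delta/2$, type $\Dh$ is indifferent, and choosing $A$ remains a best response.

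\emph{Step 4: monotonicity.} $\Phi'(D)=\rho>0$, so the propensity to accommodate is strictly increasing in $D$.

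The main obstacle is only bookkeeping in Step 3: one must check that the deviation payoff uses the assignment probabilities generated by the opponent's equilibrium play and the supervisor's equilibrium rule, not probabilities recomputed after the deviation. Because there is no off-path message, no belief revision is triggered, and Lemma~\ref{lem:probs} applies verbatim.
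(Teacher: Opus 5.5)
Your proposal is correct and follows the paper's proof essentially step for step. Both messages are on path, so Bayes' rule fixes $A\Rightarrow\Dh$ and $B\Rightarrow\Dl$; supervisor optimality comes from Lemma~\ref{lem:super}; the worker incentive constraints follow from Lemma~\ref{lem:probs} and $\Phi$ under (SEP); and monotonicity follows from $\rho>0$. Your extra remarks are correct and only make explicit what the paper leaves implicit: the assignment probabilities do not depend on the worker's own type, mixed deviations are covered, and the boundary type $\rho\Dh=\Delta/2$ is indifferent.
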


\begin{proof}
Take the conjectured strategies and beliefs. Worker optimality: by \eqref{eq:netgain} and Lemma~\ref{lem:probs}, $\Phi(\Dh)=\rho\Dh-\tfrac{\Delta}{2}\ge0$ under~\SEP, so $\Dh$ weakly prefers $A$; $\Phi(\Dl)=\rho\Dl-\tfrac{\Delta}{2}<0$, so $\Dl$ strictly prefers $B$. Supervisor optimality is Lemma~\ref{lem:super}. Because both messages occur with positive probability (each type is realized with positive probability and plays a distinct message), every message profile is on the equilibrium path, so beliefs are pinned down by Bayes' rule: a worker sending $A$ is type $\Dh$ and one sending $B$ is type $\Dl$. All conditions of PBE hold. Monotonicity of $\Phi$ is immediate since $\rho>0$.
\end{proof}

\begin{proposition}[Accommodation attracts invisible work]\label{prop:alloc}
In the equilibrium of Proposition~\ref{prop:dep}, whenever the two workers send different messages the invisible task is assigned to the accommodator, and unconditionally
\[
\Pr(N\mid A)=1-\tfrac{\pi}{2}\;>\;\tfrac{1-\pi}{2}=\Pr(N\mid B).
\]
Equivalently, the more dependent worker bears the invisible task with strictly higher probability.
\end{proposition}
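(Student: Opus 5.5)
The argument is a direct application of Lemmas~\ref{lem:super} and~\ref{lem:probs} inside the equilibrium of Proposition~\ref{prop:dep}. The plan has three steps: the deterministic statement for discordant message profiles, the unconditional comparison of assignment probabilities, and the translation from messages back to types.

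\emph{Discordant messages.} In the equilibrium of Proposition~\ref{prop:dep}, beliefs are $A\Rightarrow\Dh$ and $B\Rightarrow\Dl$. Under these beliefs Lemma~\ref{lem:super} gives $a(A,B)=1$ and $a(B,A)=0$, so whenever $m_1\neq m_2$ the accommodator receives $N$ with certainty. The justification is that the accommodator's posterior expected resistance is $r_H$, which is strictly below the bounded worker's $r_L$ by Remark~\ref{rem:condition}.

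\emph{Unconditional probabilities.} The opponent plays the separating strategy and so sends $A$ with probability $\pi$. Lemma~\ref{lem:probs} then yields
\[
\Pr(N\mid A)=1-\tfrac{\pi}{2}, \qquad \Pr(N\mid B)=\tfrac{1-\pi}{2}.
\]
For the strict inequality, note that
\[
\Bigl(1-\tfrac{\pi}{2}\Bigr)-\tfrac{1-\pi}{2}=\tfrac12>0.
\]
This holds for every $\pi\in(0,1)$, and in fact does not depend on $\pi$ at all.

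\emph{From messages to types.} Along the equilibrium path the message is a deterministic function of type: $\Dh$ sends $A$ and $\Dl$ sends $B$. Hence $\Pr(N\mid D_i=\Dh)=\Pr(N\mid m_i=A)$, and similarly for $\Dl$ with $B$. The more dependent type therefore bears $N$ with strictly higher probability.

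The one subtlety is the weak inequality in~\SEP. When $\rho\Dh=\Delta/2$, type $\Dh$ is indifferent between messages. We simply fix the equilibrium of Proposition~\ref{prop:dep}, in which $\Dh$ plays $A$, so the type-to-message map stays deterministic and the conditioning step goes through unchanged.
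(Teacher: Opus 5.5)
Your proposal is correct and follows essentially the same route as the paper. Both arguments read off $a(A,B)=1$ from Lemma~\ref{lem:super}, take $q=\tfrac12>0$ from Lemma~\ref{lem:probs}, and pass from messages to types via the separating strategy; your remark on fixing $\Dh\mapsto A$ at the boundary of \SEP\ is a harmless added precision.
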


\begin{proof}
Immediate from Lemmas~\ref{lem:super} and \ref{lem:probs}: on profile $(A,B)$ the accommodator receives $N$ with probability one, and $\Pr(N\mid A)-\Pr(N\mid B)=q=\tfrac12>0$. Since $\Dh$ sends $A$ and $\Dl$ sends $B$ in equilibrium, the dependent worker bears $N$ with the higher probability.
\end{proof}

\subsection{Regimes and comparative statics}

The threshold $D^{*}=\Delta/2\rho$ partitions the parameter space into three regimes.

\begin{proposition}[Three regimes]\label{prop:regimes}
Fix off-path beliefs by the criterion that a worker is more likely to send $A$ the more dependent they are (so an unexpected $A$ is read as $\Dh$). Then:
\begin{enumerate}[label=(\roman*),itemsep=1pt]
\item \emph{Separation (the trap regime).} If $\Dl<D^{*}\le\Dh$, the separating equilibrium of Proposition~\ref{prop:dep} obtains: dependent workers accommodate, independent workers do not, and invisible work sorts onto the dependent.
\item \emph{Pooling on boundedness.} If $D^{*}>\Dh$ (e.g.\ strong outside options across the board, large $\Delta$, or small $\rho$), both types send $B$; signals are uninformative and $N$ is assigned by coin flip.
\item \emph{Pooling on accommodation.} If $D^{*}<\Dl$ (relational stakes dominate for all), both types send $A$; signals are uninformative and $N$ is assigned by coin flip.
\end{enumerate}
\end{proposition}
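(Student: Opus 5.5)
Each regime is a PBE existence claim, and I would verify it by checking worker optimality against the supervisor's response and the beliefs, case by case. Part (i) is exactly Proposition~\ref{prop:dep}: under \SEP\ both messages are on path, so the belief criterion is not needed, and Proposition~\ref{prop:alloc} gives the sorting of $N$. The work is in (ii) and (iii), where one message is off path and the stated criterion pins down the supervisor's response to the deviation.

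\emph{Pooling on boundedness.} Both types send $B$. Then $(B,B)$ occurs with probability one, posteriors equal the prior, and by Lemma~\ref{lem:super} the supervisor flips a coin. Off path, the criterion says an unexpected $A$ is read as $\Dh$. Lemma~\ref{lem:super} then gives $a(A,B)=1$: the deviator receives $N$ for sure rather than with probability $\tfrac12$.

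The deviation changes $\Pr(N)$ by $q'=1-\tfrac12=\tfrac12$, so the net gain from deviating is $\rho D-\tfrac{\Delta}{2}=\Phi(D)$. This is the same increment $q=\tfrac12$ as in Lemma~\ref{lem:probs}. It is an artifact of symmetry but convenient: the same threshold $D^*$ governs deviations. Since $D^*>\Dh$, we have $\Phi(\Dh)<0$ and a fortiori $\Phi(\Dl)<0$, so neither type deviates.

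\emph{Pooling on accommodation.} Both send $A$. On path the profile is $(A,A)$, so $\Pr(N)=\tfrac12$. A deviation to $B$ is off path. The criterion (monotonicity: less dependent types are the likelier senders of $B$) reads it as $\Dl$, so by Lemma~\ref{lem:super} we get $a(B,A)=0$, and the deviator avoids $N$ entirely.

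The deviation therefore gains $\Delta/2$ in task value and loses $\rho D$. Deviation is unprofitable iff $\rho D\ge\Delta/2$. This holds for both types since $D^*<\Dl<\Dh$.

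In both pooling cases signals are uninformative on path, and $N$ is allocated by coin flip, as claimed.

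The main obstacle is making the off-path belief criterion precise. I would formalize it as follows: for an off-path message, the posterior puts weight one on $\Dh$ after $A$ and on $\Dl$ after $B$. This is justified in the spirit of D1. From the expression for $\Phi$, the set of supervisor responses making a deviation to $A$ profitable is strictly larger for $\Dh$ than for $\Dl$, because $\rho\Dh>\rho\Dl$. Symmetrically, $\Dl$ gains more from deviating to $B$.

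I would also remark that the three regions partition the parameter space up to the boundary cases $D^*=\Dh$ (included in (i)) and $D^*=\Dl$. At the latter boundary $\Dl$ is indifferent, and the pooling-on-$A$ verification still goes through weakly. Finally, I would note that the proposition asserts existence of these equilibria under the criterion, not uniqueness, unless uniqueness is argued by ruling out separation outside \SEP. That step follows because separation requires $\Phi(\Dh)\ge0>\Phi(\Dl)$, i.e.\ \SEP\ itself.
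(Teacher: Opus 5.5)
Your proposal is correct and follows essentially the same route as the paper's proof. Part (i) is Proposition~\ref{prop:dep}, and parts (ii)--(iii) check the single off-path deviation under the monotone belief. The comparisons are the paper's own: $\rho D+w_V-\Delta$ against $w_V-\tfrac{\Delta}{2}$ for pooling on $B$, and $w_V$ against $\rho D+w_V-\tfrac{\Delta}{2}$ for pooling on $A$. Your D1-style justification of the beliefs parallels the appendix's intuitive-criterion remark. Your closing uniqueness aside goes beyond what the statement claims, and as written it is incomplete, since it would also have to rule out reverse separation and mixed profiles. The existence argument is complete without it.
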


\begin{proof}
Region (i) is Proposition~\ref{prop:dep}. For (ii), suppose both types send $B$. Beliefs on $B$ equal the prior; an off-path deviation to $A$ is read as $\Dh$ by the stated criterion, so against a $B$-opponent the deviator faces profile $(A,B)$ and receives $N$ with probability one, yielding $\rho D+w_V-\Delta$ versus the equilibrium $w_V-\Delta/2$ from $(B,B)$. The deviation is unprofitable iff $\rho D\le\Delta/2$, which holds for both types when $D^{*}>\Dh$. Hence pooling on $B$ is a PBE. Symmetrically for (iii): if both send $A$, an off-path $B$ is read as $\Dl$, so the deviator (against an $A$-opponent, profile $(B,A)$) avoids $N$ entirely, giving $w_V$ versus the equilibrium $\rho D+w_V-\Delta/2$ from the $(A,A)$ tie; the deviation is unprofitable iff $\rho D\ge\Delta/2$, which holds for both types when $D^{*}<\Dl$. Hence pooling on $A$ is a PBE. Appendix~\ref{app:regimes} gives the full argument.
\end{proof}

\noindent Figure~\ref{fig:regimes} maps the three regimes, both in the space of types $(\Dl,\Dh)$ and in the space of the two stakes $(\rho,\Delta)$.

\begin{figure}[t]
\centering
\setlength{\tabcolsep}{2pt}
\begin{tabular}{cc}
\begin{tikzpicture}
\begin{axis}[calibfig, xlabel={independent type $D_L$}, ylabel={dependent type $D_H$},
  xmin=0, xmax=1, ymin=0, ymax=1,
  xtick={0,0.5,1}, xticklabels={$0$,$D^{*}$,$1$},
  ytick={0,0.5,1}, yticklabels={$0$,$D^{*}$,$1$}]
\fill[pattern=north east lines, pattern color=black!30] (axis cs:0,0) -- (axis cs:1,0) -- (axis cs:1,1) -- cycle;
\fill[black!6] (axis cs:0,0) -- (axis cs:0,0.5) -- (axis cs:0.5,0.5) -- cycle;
\fill[black!6] (axis cs:0.5,0.5) -- (axis cs:0.5,1) -- (axis cs:1,1) -- cycle;
\fill[black!20] (axis cs:0,0.5) -- (axis cs:0.5,0.5) -- (axis cs:0.5,1) -- (axis cs:0,1) -- cycle;
\draw (axis cs:0,0) -- (axis cs:1,1);
\draw[dashed] (axis cs:0,0.5) -- (axis cs:0.5,0.5) -- (axis cs:0.5,1);
\addplot[only marks, mark=*, mark size=1.4pt] coordinates {(0.3,0.8)};
\node[font=\tiny, anchor=west] at (axis cs:0.32,0.8) {Table~\ref{tab:params}};
\node[font=\tiny] at (axis cs:0.25,0.93) {separation (trap)};
\node[font=\tiny] at (axis cs:0.64,0.89) {pool on $A$};
\node[font=\tiny, align=center] at (axis cs:0.13,0.37) {pool\\ on $B$};
\node[font=\tiny, align=center, fill=white, inner sep=1.5pt] at (axis cs:0.73,0.25) {excluded:\\ $D_L\ge D_H$};
\end{axis}
\end{tikzpicture}
&
\begin{tikzpicture}
\begin{axis}[calibfig, xlabel={relational stakes $\rho$}, ylabel={value gap $\Delta=w_V-w_N$},
  xmin=0, xmax=2, ymin=0, ymax=2, xtick={0,1,2}, ytick={0,1,2}]
\fill[black!6] (axis cs:0,0) -- (axis cs:1.25,2) -- (axis cs:0,2) -- cycle;
\fill[black!6] (axis cs:0,0) -- (axis cs:2,0) -- (axis cs:2,1.2) -- cycle;
\fill[black!20] (axis cs:0,0) -- (axis cs:2,1.2) -- (axis cs:2,2) -- (axis cs:1.25,2) -- cycle;
\draw[dashed] (axis cs:0,0) -- node[pos=0.8, sloped, above, font=\tiny] {$\Delta=2\rho D_H$} (axis cs:1.25,2);
\draw[dashed] (axis cs:0,0) -- node[pos=0.8, sloped, below, font=\tiny] {$\Delta=2\rho D_L$} (axis cs:2,1.2);
\draw[-{Stealth[length=1.8mm]}, line width=0.6pt] (axis cs:1,0.93) -- (axis cs:1,0.4);
\addplot[only marks, mark=*, mark size=1.4pt] coordinates {(1,1)};
\node[font=\tiny, anchor=east, align=right] at (axis cs:0.97,0.72) {recognition:\\ $\Delta\downarrow$};
\node[font=\tiny] at (axis cs:0.33,1.72) {pool on $B$};
\node[font=\tiny] at (axis cs:1.55,1.72) {separation (trap)};
\node[font=\tiny] at (axis cs:1.55,0.3) {pool on $A$};
\end{axis}
\end{tikzpicture}
\\[-3pt]
{\footnotesize (a) Regimes in type space} & {\footnotesize (b) Regimes in stakes space} \\
\end{tabular}
\caption{Equilibrium regimes (Proposition~\ref{prop:regimes}). \textbf{(a)} For a given threshold $D^{*}=\Delta/2\rho$, the types determine the regime: separation, the trap regime, requires $D_L<D^{*}\le D_H$ (dark); both types pool on $B$ if $D_H<D^{*}$ and on $A$ if $D_L>D^{*}$ (light), in which case messages are uninformative and $N$ is assigned by coin flip. The hatched region is excluded because $D_L<D_H$ by definition. The dot marks the types of Table~\ref{tab:params}. \textbf{(b)} The same partition for the types of Table~\ref{tab:params} ($D_L=0.3$, $D_H=0.8$), in the space of relational stakes $\rho$ and the value gap $\Delta$: separation holds between the rays $\Delta=2\rho D_L$ and $\Delta=2\rho D_H$. The calibrated point $(\rho,\Delta)=(1,1)$ lies in the separation region; recognizing invisible work shrinks $\Delta$ and moves the economy into pooling on accommodation (Proposition~\ref{prop:recognition}).}
\label{fig:regimes}
\end{figure}

\begin{corollary}[Comparative statics of the trap]\label{cor:cs}
On the separation region:
\begin{enumerate}[label=(\alph*),itemsep=1pt]
\item $\partial D^{*}/\partial\rho<0$: higher relational stakes (dependence punishing the appearance of difficulty more) lower the accommodation threshold, so that more types accommodate.
\item $\partial D^{*}/\partial\Delta>0$: when invisible work is more career-damaging relative to visible work, workers accommodate less to avoid the heightened risk; those who still accommodate are the most dependent.
\item The supervisor's gain from sorting is increasing in the friction gap $r_L-r_H$ and vanishes as $r_L-r_H\to0$ (Remark~\ref{rem:condition}). The allocation rule of Lemma~\ref{lem:super}, and hence $q$, is the same for every $r_L>r_H$.
\end{enumerate}
\end{corollary}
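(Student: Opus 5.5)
The plan is to read each part of the corollary directly off the closed forms already derived, namely the threshold \eqref{eq:threshold}, $D^{*}=\Delta q/\rho=\Delta/(2\rho)$, and the allocation rule of Lemma~\ref{lem:super}. Throughout I stay on the separation region $\Dl<D^{*}\le\Dh$ of Proposition~\ref{prop:regimes}(i). On this region the equilibrium of Proposition~\ref{prop:dep} is in force, so $q=\tfrac12$ by Lemma~\ref{lem:probs}, and $q$ does not depend on $\rho$, $\Delta$, or the friction parameters. That independence is what lets me differentiate the threshold formula term by term.

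For (a) and (b), I differentiate $D^{*}=\Delta/(2\rho)$ and get $\partial D^{*}/\partial\rho=-\Delta/(2\rho^{2})<0$ and $\partial D^{*}/\partial\Delta=1/(2\rho)>0$, using $\Delta>0$ and $\rho>0$.

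The verbal interpretation comes from the threshold rule following \eqref{eq:netgain}: a type accommodates iff $\Phi(D)\ge0$, equivalently $D\ge D^{*}$. So lowering $D^{*}$ enlarges the set of accommodating types, and raising it shrinks that set toward the top of the dependence distribution. In the binary-type model this also moves the boundaries of the separation region, consistent with panel (b) of Figure~\ref{fig:regimes}.

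For (c), I first compute the supervisor's expected friction under sorting and compare it with random (coin-flip) assignment. With probability $2\pi(1-\pi)$ the types differ. In that event sorting places $N$ on the $\Dh$ worker at cost $\phi r_H$, while a coin flip costs $\phi(r_L+r_H)/2$. The two rules coincide when the types are equal. The gain is therefore $\pi(1-\pi)\phi(r_L-r_H)$, which is increasing in $r_L-r_H$ and tends to $0$ as the gap closes, in line with Remark~\ref{rem:condition}.

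Second, the proof of Lemma~\ref{lem:super} shows that the supervisor's choice depends only on the ranking of posterior expected friction. Under separating beliefs that ranking is fixed by the sign of $r_L-r_H$, so the rule $a(A,B)=1$, $a(B,A)=0$, $a(A,A)=a(B,B)=\tfrac12$ is unchanged for every $r_L>r_H$. Since $q$ in Lemma~\ref{lem:probs} is computed from this rule and $\pi$ alone, $q$ is invariant as well.

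None of these steps is a real obstacle. The only delicate point is one of interpretation in (c): the phrase ``gain from sorting'' has to be fixed as the benchmark comparison with uninformative (coin-flip) allocation, as is done above. I would state that benchmark explicitly so the monotonicity claim is well defined.
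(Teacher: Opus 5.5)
Your proposal is correct and matches the paper's proof: (a) and (b) come from differentiating $D^{*}=\Delta/(2\rho)$, and (c) from the expected friction saving, which is proportional to $\phi(r_L-r_H)$ times the probability that the posteriors differ, together with the fact that the rule in Lemma~\ref{lem:super} depends only on the ranking of posteriors. The one difference is the benchmark: you compare with coin-flip allocation and get $\pi(1-\pi)\phi(r_L-r_H)$, whereas the paper implicitly compares with placing $N$ on the other worker and gets $2\pi(1-\pi)\phi(r_L-r_H)$; either choice gives the stated monotonicity and the vanishing of the gain.
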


\begin{proof}
Parts (a)--(b) follow by differentiating \eqref{eq:threshold}. Part (c) follows from Lemma~\ref{lem:super}: the expected friction saving from placing $N$ on the more dependent worker is $\phi(r_L-r_H)$ times the probability the workers' posteriors differ, which is zero when $r_L=r_H$. The rule in Lemma~\ref{lem:super} depends only on the ranking of posteriors, not on the size of $r_L-r_H$.
\end{proof}

\begin{remark}\label{rem:general}
With a continuum of types $D\sim F$ on $[\underline D,\overline D]$, accommodation remains a cutoff rule $D\ge D^{*}$ with $D^{*}$ solving $\rho D^{*}=q(D^{*})\,\Delta$, where the assignment sensitivity $q(\cdot)$ is now an equilibrium object determined by $F$ and the cutoff. Existence of an interior cutoff follows from continuity and the monotonicity of $\Phi$; the binary model is the special case $q=\tfrac12$.
\end{remark}

\section{Dynamics: The Accommodation Trap}\label{sec:dynamics}

The stage game now repeats, $t=0,1,2,\dots$, and task receipt shapes future dependence. Visible work builds human capital and recognition and so strengthens the outside option; invisible work crowds out career-building time and so weakens it. In terms of dependence,
\begin{equation}\label{eq:lom}
D_{i,t+1}\;=\;D_{i,t}\;+\;\gamma\,\ind\{i\text{ does }N\text{ at }t\}\;-\;\alpha\,\ind\{i\text{ does }V\text{ at }t\},\qquad \alpha,\gamma>0 .
\end{equation}
Doing $V$ lowers dependence by $\alpha$; doing $N$ raises it by $\gamma$. Proposition~\ref{prop:trap} states the result under the stage-by-stage benchmark in which workers play the stage equilibrium each period; Section~\ref{sub:fl} then shows the trap survives \emph{forward-looking} workers under a transparent condition.

\begin{proposition}[Self-reinforcing trap]\label{prop:trap}
Suppose at $t=0$ the workers differ and lie in the separation regime, with $D_{1,0}<D^{*}\le D_{2,0}$ (worker $2$ more dependent). Then along the equilibrium path:
\begin{enumerate}[label=(\roman*),itemsep=1pt]
\item worker $2$ accommodates and is assigned $N$, and worker $1$ is bounded and receives $V$, in every period;
\item the dependence gap grows without bound,
\[
D_{2,t}-D_{1,t}\;=\;\big(D_{2,0}-D_{1,0}\big)+(\alpha+\gamma)\,t \;\uparrow\;\infty;
\]
\item the configuration is absorbing: the inequalities $D_{1,t}<D^{*}\le D_{2,t}$ are preserved for all $t$, so the allocation never reverts.
\end{enumerate}
A transitory difference in dependence at $t=0$ thus becomes permanent, and the worker who began more dependent is locked into invisible work while the other accumulates visible work and outside options.
\end{proposition}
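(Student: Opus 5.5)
The argument is an induction on $t$. The inductive hypothesis is $D_{1,t}<D^{*}\le D_{2,t}$, together with the closed forms $D_{1,t}=D_{1,0}-\alpha t$ and $D_{2,t}=D_{2,0}+\gamma t$. The base case $t=0$ is the assumption of the statement.

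For the inductive step, I would interpret the stage-by-stage benchmark as follows: in period $t$ the workers play the separating stage equilibrium of Proposition~\ref{prop:dep}, with each worker's current dependence as his type. The threshold $D^{*}=\Delta/2\rho$ does not depend on $t$, because $\rho$, $\Delta$ and (via Lemma~\ref{lem:probs}) $q=\tfrac12$ are time-invariant. The inductive hypothesis therefore gives $\Phi(D_{2,t})=\rho D_{2,t}-\Delta/2\ge 0$ and $\Phi(D_{1,t})<0$. Hence worker~$2$ sends $A$ and worker~$1$ sends $B$, the profile is $(B,A)$, and Lemma~\ref{lem:super} (equivalently Proposition~\ref{prop:alloc}) gives $a(B,A)=0$. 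Worker~$2$ does $N$ and worker~$1$ does $V$, which is part~(i) for period $t$.

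Substituting into the law of motion \eqref{eq:lom} gives $D_{2,t+1}=D_{2,t}+\gamma$ and $D_{1,t+1}=D_{1,t}-\alpha$. Since $\gamma>0$ and $\alpha>0$, we get $D_{2,t+1}>D_{2,t}\ge D^{*}$ and $D_{1,t+1}<D_{1,t}<D^{*}$, so the inequalities of part~(iii) carry to $t+1$ and the induction closes. Subtracting the closed forms yields part~(ii): the gap equals $(D_{2,0}-D_{1,0})+(\alpha+\gamma)t$, which diverges because $\alpha+\gamma>0$.

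The main obstacle is interpretive rather than computational. Proposition~\ref{prop:dep} is stated for binary types $\{\Dl,\Dh\}$ with a common prior $\pi$, whereas in the dynamic model the dependences drift off any fixed two-point support. I would handle this by noting that under the stage-by-stage benchmark the supervisor's posterior after $(B,A)$ ranks worker~$2$ as more dependent. This holds whether one relabels each period's types as $\Dl\equiv D_{1,t}$ and $\Dh\equiv D_{2,t}$, so that (SEP) holds by the inductive hypothesis, or appeals to the cutoff rule of Remark~\ref{rem:general}. Only this ranking enters Lemma~\ref{lem:super}, and the worker's incentive comparison depends on $q>0$ only through the fixed threshold $D^{*}$. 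A second point to flag is that $D_{1,t}$ eventually becomes negative. I would either remark that nothing in the stage-game argument requires positivity beyond $\rho D<\Delta/2$, or truncate at zero, which does not affect the inequalities.
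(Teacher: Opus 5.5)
Your proof is correct and is essentially the paper's: it runs an induction on $D_{1,t}<D^{*}\le D_{2,t}$, uses Propositions~\ref{prop:dep} and~\ref{prop:alloc} to deliver the profile $(B,A)$ with $N$ going to worker~$2$, uses the law of motion \eqref{eq:lom} to preserve the inequalities, and obtains (ii) by differencing (your subtraction of closed forms is the same computation). Your remarks on relabeling the types period by period and on $D_{1,t}$ becoming negative address points the paper leaves implicit, though only your untruncated option is consistent with the exact gap formula in (ii).
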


\begin{proof}
By induction. At $t$, suppose $D_{1,t}<D^{*}\le D_{2,t}$. By Proposition~\ref{prop:dep}, worker $2$ (type above threshold) accommodates and worker $1$ does not; by Proposition~\ref{prop:alloc}, the differing-message profile $(B,A)$ assigns $N$ to worker $2$ and $V$ to worker $1$. By \eqref{eq:lom}, $D_{2,t+1}=D_{2,t}+\gamma\ge D_{2,t}\ge D^{*}$ and $D_{1,t+1}=D_{1,t}-\alpha\le D_{1,t}<D^{*}$, so the threshold inequalities are preserved, establishing (i) and (iii). Differencing \eqref{eq:lom} across the two workers gives $D_{2,t+1}-D_{1,t+1}=(D_{2,t}-D_{1,t})+(\alpha+\gamma)$, and iterating from $t=0$ yields (ii).
\end{proof}

\begin{remark}[Bounding the state]\label{rem:bound}
Dependence is realistically bounded; let $D_{i,t}\in[D_{\min},D_{\max}]$ with \eqref{eq:lom} truncated at the endpoints. The divergence in Proposition~\ref{prop:trap} then becomes \emph{absorption}: the trapped worker's type rises to $D_{\max}$ and the favored worker's falls to $D_{\min}$, after which the allocation is permanent. All qualitative conclusions are unchanged, and continuation values stay finite, a property used below.
\end{remark}

\begin{proposition}[Lock-in from equal starts]\label{prop:lockin}
Consider the bounded state of Remark~\ref{rem:bound} with $D_{\min}<D^{*}\le D_{\max}$ and the stage-by-stage play of Proposition~\ref{prop:trap}: each period a worker accommodates if and only if $D_{i,t}\ge D^{*}$, and the supervisor allocates as in Lemma~\ref{lem:super}, breaking ties with a fair coin. From every initial state $(D_{1,0},D_{2,0})$, including $D_{1,0}=D_{2,0}$, the workers reach the configuration of Proposition~\ref{prop:trap}, in which one of them does $N$ and the other does $V$ in every later period, in finite time with probability one. With
\[
k\;\equiv\;\max\Big\{\Big\lceil \frac{D^{*}-D_{\min}}{\gamma}\Big\rceil,\ \Big\lfloor \frac{D_{\max}-D^{*}}{\alpha}\Big\rfloor+1\Big\},
\]
the probability that this configuration has not been reached by period $mk$ is at most $(1-2^{1-k})^{m}$.
\end{proposition}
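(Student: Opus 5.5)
The plan is to treat the stage-by-stage play as a Markov chain on the bounded state space $[D_{\min},D_{\max}]^2$ and proceed in three steps. First, identify an absorbing set. Second, show it is reached within $k$ periods with probability at least $2^{1-k}$ from any starting state. Third, iterate over blocks of length $k$.

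\textbf{Step 1: the absorbing set.} Call a state \emph{separated} if one worker's dependence is below $D^{*}$ and the other's is at least $D^{*}$. In a separated state the messages differ, so by Lemma~\ref{lem:super} the accommodator receives $N$ deterministically. The induction in the proof of Proposition~\ref{prop:trap} then carries over verbatim, with the truncation of Remark~\ref{rem:bound} causing no trouble. The $N$-worker's dependence stays at least $D^{*}$, because it weakly rises and the cap $D_{\max}\ge D^{*}$ binds only from above. The $V$-worker's dependence stays below $D^{*}$, because it weakly falls and the floor $D_{\min}<D^{*}$. So the separated set is absorbing, and once it is reached one worker does $N$ forever. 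Any initial state that is already separated is done at $t=0$. This includes asymmetric starts on opposite sides of the threshold.

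\textbf{Step 2: a $k$-period escape bound.} Suppose the state is not separated. Then both workers lie on the same side of $D^{*}$, send the same message, and the supervisor flips a fair coin. Let $j$ be whoever receives $N$ at the first flip; this event has probability one regardless of which worker $j$ is. Now condition on the event that every subsequent tie within the next $k-1$ periods is also broken in favour of giving $N$ to $j$. This event has probability at least $2^{-(k-1)}=2^{1-k}$; if separation occurs earlier, no further flips are needed. There are two cases.
\begin{itemize}
\item \emph{Both workers start below $D^{*}$.} Worker $j$'s dependence rises by $\gamma$ each period (truncated at $D_{\max}\ge D^{*}$), starting from at least $D_{\min}$. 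So after $\lceil (D^{*}-D_{\min})/\gamma\rceil\le k$ periods it is at least $D^{*}$. The other worker only falls, so it stays below $D^{*}$, and the state is then separated.
\item \emph{Both workers start at or above $D^{*}$.} The other worker's dependence falls by $\alpha$ each period from at most $D_{\max}$. So it drops strictly below $D^{*}$ after $\lfloor (D_{\max}-D^{*})/\alpha\rfloor+1\le k$ periods. Meanwhile worker $j$ only rises, so it stays at least $D^{*}$.
\end{itemize}
In both cases, note that while the state is unseparated the workers sit on a common side, so every period involves a tie. This is exactly why the favourable event is a fixed pattern of coin flips.

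\textbf{Step 3: iterate over blocks.} By the Markov property and the fact that the bound in Step 2 is uniform in the starting state, the conditional probability of failing to separate during block $[(\ell-1)k,\ell k)$ is at most $1-2^{1-k}$, given no separation before it. Multiplying over the blocks gives $(1-2^{1-k})^m$ by period $mk$. Letting $m\to\infty$ yields absorption in finite time with probability one.

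The main obstacle is Step 2's bookkeeping. One must check that the boundary conventions are handled correctly: accommodation occurs at $D\ge D^{*}$, which dictates the ceiling in the first case and the floor-plus-one in the second. One must also check that truncation at $D_{\min}$ and $D_{\max}$ never undoes a crossing. The argument relies on the hypothesis $D_{\min}<D^{*}\le D_{\max}$ at exactly these two points.
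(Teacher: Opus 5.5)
Your proposal is correct and follows essentially the same argument as the paper. The steps match: the separated set is absorbing, there is a uniform lower bound of $2^{1-k}$ on reaching separation within $k$ periods via a fixed pattern of coin flips, the same two-case threshold counting applies, and the blocks are iterated using the Markov property. Your device of letting $j$ be whoever wins the first flip and then requiring $k-1$ further flips in $j$'s favour is the same computation as the paper's union of two disjoint events, each of probability $2^{-k}$.
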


\begin{proof}
Call a state \emph{separated} if $\min_i D_{i,t}<D^{*}\le\max_i D_{i,t}$. A separated state stays separated and fixes the allocation: the worker at or above $D^{*}$ accommodates and does $N$, so their dependence becomes $\min\{D_{\max},D+\gamma\}\ge D^{*}$, while the worker below $D^{*}$ is bounded and does $V$, so theirs becomes $\max\{D_{\min},D-\alpha\}<D^{*}$. This is Proposition~\ref{prop:trap}(iii) under the bounded law of motion.

A state that is not separated has both workers below $D^{*}$ or both at or above it. They then send the same message, and the coin decides who does $N$. For a given worker $X$, consider the event that in each of the next $k$ periods either the state is already separated or the coin gives $N$ to $X$; it has probability at least $2^{-k}$, and the two events for $X=1$ and $X=2$ are disjoint because the first flip favors only one worker, so their union has probability at least $2^{1-k}$. On this union the state becomes separated within $k$ periods. If both workers start below $D^{*}$, the dependence of $X$ rises by $\gamma$ per period, capped at $D_{\max}\ge D^{*}$, and so reaches $D^{*}$ within $\lceil (D^{*}-D_{\min})/\gamma\rceil\le k$ periods, while the other worker, who does $V$, stays below $D^{*}$. If both start at or above $D^{*}$, the other worker's dependence falls by $\alpha$ per period, floored at $D_{\min}<D^{*}$, and so drops below $D^{*}$ within $\lfloor (D_{\max}-D^{*})/\alpha\rfloor+1\le k$ periods, while $X$ stays at or above $D^{*}$. Hence from every state that is not separated, the probability of separation within the next $k$ periods is at least $2^{1-k}$. By the Markov property, the probability of remaining unseparated through $m$ consecutive blocks of $k$ periods is at most $(1-2^{1-k})^{m}$, which tends to zero as $m\to\infty$.
\end{proof}

\noindent The trap therefore requires no initial difference in dependence: chance alone separates identical workers, and once separated they never trade places.

\subsection{Forward-looking workers}\label{sub:fl}

Proposition~\ref{prop:trap} assumed stage-by-stage play. The natural objection is that a forward-looking worker who foresees the trap might decline to accommodate \emph{today} in order to escape it. The trap is robust to this objection under a transparent condition, and the condition is most easily met for exactly the workers the model is about.

The key is that the state is \emph{sticky}: one period of bounded communication does not restore a worker's outside option. In the trapped configuration, deviating to $B$ turns the profile into $(B,B)$, so the deviator receives the visible task only with probability $\tfrac12$, lowers their dependence by at most $\alpha$, and, if $D\ge D^{*}+\alpha$, remains above the threshold. They are therefore still trapped next period: escape would require a sustained, relationally costly campaign of boundedness, not a one-shot deviation. This makes the static incentive condition close to sufficient.

\begin{proposition}[Forward-looking trap]\label{prop:fl}
Consider the bounded-state dynamic game (Remark~\ref{rem:bound}) with worker discount factor $\delta\in(0,1)$ and the separating trap of Proposition~\ref{prop:trap}, in which the favored worker satisfies $D_{1}+\gamma<D^{*}$, so that one unexpected assignment of $N$ does not lift them over the threshold. Let the trapped worker's dependence satisfy $D\ge D^{*}+\alpha$. If
\begin{equation}\label{eq:fl}
\rho D \;\ge\; \frac{\Delta}{2} \;+\; \underbrace{\frac{\delta}{1-\delta}\,\overline{W}'\,\frac{\alpha+\gamma}{2}}_{\textstyle \Lambda},
\qquad \overline{W}'\equiv W'(O_{\min}),
\end{equation}
then a one-shot deviation to $B$, followed by a return to the trap, lowers the trapped worker's expected discounted payoff, and a one-shot deviation to $A$ lowers the favored worker's. Because the trapped worker's dependence only rises along the trap and the favored worker's only falls, these conditions, once met, hold in every later period, so the trap is robust to one-shot deviations along its entire path. Writing $M\equiv\rho D-\tfrac{\Delta}{2}$ for the static margin and $K\equiv\overline{W}'\,\tfrac{\alpha+\gamma}{2}$, condition \eqref{eq:fl} is equivalent to $\delta\le\bar\delta(D)\equiv M/(M+K)$, a cutoff that rises toward one as the static margin grows (high relational stakes). It therefore holds in the limits $\delta\to0$ (survival pressure) or $\alpha,\gamma\to0$ (gradual human-capital accumulation), and it can fail for patient workers facing fast human-capital returns. When it fails, the trap is no longer guaranteed by this argument; since $\Lambda$ rises with $\delta$ and with $\alpha+\gamma$, the guarantee is weakest for patient workers facing fast human-capital returns.
\end{proposition}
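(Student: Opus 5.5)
The approach is the one-shot deviation principle applied to the bounded-state game of Remark~\ref{rem:bound}. Because the state is bounded, continuation values are finite, so it suffices to check that neither worker gains from deviating for one period and then returning to the stationary trap strategies.

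I take the worker's per-period payoff to be the stage payoff plus a flow $W(O_t)$ from the outside option. Here $O_t$ is decreasing in $D_t$ with unit slope, and $W$ is concave, so $W'(O)\le W'(O_{\min})=\overline{W}'$ on the bounded state space. This is the structure that $\overline{W}'$ in \eqref{eq:fl} presupposes. The first step is to make this explicit, so that any wedge $\varepsilon$ in next-period dependence changes each later flow payoff by at most $\overline{W}'\varepsilon$.

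For the trapped worker, deviating to $B$ produces the profile $(B,B)$. The supervisor still uses Lemma~\ref{lem:super}, since $B$ is an on-path message with belief $\Dl$, so the deviator gets $N$ with probability $\tfrac12$ instead of $1$.
\begin{itemize}
\item \emph{Static effect.} The deviator loses $\rho D$ and gains $\Delta/2$ in expected task value.
\item \emph{Dynamic effect.} With probability $\tfrac12$, next-period dependence is $D-\alpha$ rather than $D+\gamma$, a wedge of $\alpha+\gamma$. Because $D\ge D^{*}+\alpha$, the deviator stays at or above $D^{*}$ and so is still trapped. Both paths then follow the same allocation, receiving $N$ forever.
\item \emph{The wedge does not grow.} The truncated law of motion is $1$-Lipschitz in the state, so the gap between the two dependence paths never exceeds $\alpha+\gamma$. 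Each later period's flow difference is therefore at most $\overline{W}'(\alpha+\gamma)$.
\item \emph{Bound.} Summing from the next period on, the expected future gain is at most $\tfrac12\cdot\frac{\delta}{1-\delta}\overline{W}'(\alpha+\gamma)=\Lambda$.
\end{itemize}
The deviation is unprofitable whenever $\rho D\ge\Delta/2+\Lambda$, which is \eqref{eq:fl}. I also need to check that the opponent's play is unchanged after the deviation. This holds because $D_1$ stays below $D^{*}$ after receiving $V$.

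For the favored worker, deviating to $A$ produces the profile $(A,A)$ and a coin flip.
\begin{itemize}
\item \emph{Static effect.} The deviator gains $\rho D_1$ and loses $\Delta/2$. Since $D_1<D^{*}$, this is a strict net loss.
\item \emph{Dynamic effect.} With probability $\tfrac12$ they receive $N$, so $D_1$ becomes $D_1+\gamma<D^{*}$ instead of $D_1-\alpha$. They remain below the threshold and return to being favored, and the trapped worker (dependence at least $D^{*}$ after receiving $V$ from $D\ge D^{*}+\alpha$) stays above it.
\item \emph{Sign of the future effect.} The future effect is a weakly higher dependence path, hence a weakly lower outside-option flow.
\end{itemize}
So both the static and the future effect are nonpositive, and this deviation is unprofitable without any further condition.

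Persistence along the path follows from Proposition~\ref{prop:trap}: $D$ for the trapped worker only rises, which keeps $D\ge D^{*}+\alpha$ and raises $\rho D$, while $D_1$ only falls, which keeps $D_1+\gamma<D^{*}$. The equivalence with $\delta\le\bar\delta(D)$ is direct algebra. Write \eqref{eq:fl} as $M(1-\delta)\ge\delta K$, which gives $\delta\le M/(M+K)$. The limiting statements follow: $\bar\delta\to1$ as $M\to\infty$ or $K\to0$, and the condition holds trivially as $\delta\to0$.

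I expect the main obstacle to be the dynamic bound. It requires fixing the welfare specification behind $\overline{W}'$ and verifying that truncation at $D_{\min}$ and $D_{\max}$ never amplifies the initial wedge of $\alpha+\gamma$. I would handle both through the Lipschitz and concavity argument above.
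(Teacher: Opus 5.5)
Your proposal is correct and follows the paper's proof essentially step for step: the static loss $\tfrac{\Delta}{2}-\rho D$, an expected continuation wedge of $\tfrac{\alpha+\gamma}{2}$ bounded through $\overline{W}'$ to give $\Lambda$, and a favored-worker deviation that loses on both the static and the dynamic margin. Your per-branch, $1$-Lipschitz handling of the truncation is a slightly cleaner version of the paper's ``parallel shift'' remark. One small correction: when the $(B,B)$ coin gives the deviator $V$, it gives the favored worker $N$. So what keeps the favored worker's play unchanged is the hypothesis $D_1+\gamma<D^{*}$, not that $D_1$ stays below $D^{*}$ after receiving $V$.
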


\begin{proof}
Consider a single deviation to $B$ at a trapped state $D\ge D^{*}+\alpha$, after which the worker returns to the trap. The immediate payoff change is $\tfrac{\Delta}{2}-\rho D$: the deviator forgoes the relational benefit $\rho D$ and gains expected task payoff $\tfrac{\Delta}{2}$ from the $(B,B)$ coin flip. After the deviation they revert to accommodation; since $D-\alpha\ge D^{*}$, they remain trapped, and since $D_{1}+\gamma<D^{*}$ the favored worker stays below the threshold even if the coin flip gives them $N$, so the continuation play is unchanged except for a parallel shift of their dependence path downward by an expected $(\alpha+\gamma)/2$ (the difference between the deviation's expected increment $(\gamma-\alpha)/2$ and the on-path increment $\gamma$). Because $W'\le\overline{W}'$ on $[O_{\min},\,\cdot\,]$, the discounted continuation gain is at most $\Lambda$ in \eqref{eq:fl}. The total gain from deviating is therefore at most $\tfrac{\Delta}{2}-\rho D+\Lambda\le0$ under \eqref{eq:fl}. For the favored worker, a one-shot deviation to $A$ turns the profile into $(A,A)$: its immediate effect is $\rho D_1-\tfrac{\Delta}{2}<0$, since $D_1<D^{*}$, and it weakly raises their expected future dependence, which lowers security welfare, while their later play is unchanged because $D_1+\gamma<D^{*}$. Supervisor sequential rationality is Lemma~\ref{lem:super}. Finally, $M\ge\tfrac{\delta}{1-\delta}K$ rearranges to $\delta\le M/(M+K)$, since $M>0$ for $D\ge D^{*}+\alpha$.
\end{proof}

\begin{remark}[What the supervisor observes]\label{rem:history}
The supervisor is assumed to allocate on current messages, as in the stage game, which is natural when outside options also move for reasons she does not see. If she instead tracks history, a worker who has accommodated and done $N$ is already marked as dependent, so a bounded message no longer changes her assignment. The allocation of Proposition~\ref{prop:trap} then persists whatever the later messages, and a trapped worker's deviation only forfeits $\rho D$. The trap is therefore at least as robust as condition \eqref{eq:fl} indicates.
\end{remark}

\begin{remark}[Dependence lowers the effective discount factor]\label{rem:delta}
Condition \eqref{eq:fl} ties the trap to patience, and the link runs through dependence itself. A worker whose job secures survival, status, funding, or insurance discounts the future heavily relative to keeping the current relationship intact; their effective $\delta$ is low, so \eqref{eq:fl} holds easily. The same dependence that makes accommodation rational in the static model makes the dynamic trap self-enforcing. A related link between impatience and separation appears in implicit contracting, where a lower discount factor can move play from pooling to separation \citep{gurtler2008}. A secure worker (high $\delta$, low $\rho$) refuses to be trapped, consistent with Proposition~\ref{prop:dep}. Types within $\alpha$ of the threshold, who \emph{can} escape with a single deviation, are the margin at which the trap breaks.
\end{remark}

\section{Welfare and Correctives}\label{sec:welfare}

\subsection{Is the trap inefficient?}

Each period exactly one $V$ and one $N$ are completed, so there is no static output loss; the supervisor's friction is also minimized by construction. The welfare question is therefore distributional and dynamic. Let a worker's welfare be an increasing, strictly concave function $W$ of their outside option $O_i$ (equivalently, decreasing convex in $D_i$), capturing diminishing returns: an extra unit of outside option is worth more to a worker who has little. A planner who also values worker welfare weighs the supervisor's friction against $\sum_i W(O_i)$.

\begin{proposition}[Inefficiency of the trap]\label{prop:welfare}
Suppose $W$ is strictly concave. Consider reallocating the visible task in some period from the less dependent to the more dependent worker (equivalently, rotating $N$ away from the dependent worker). This raises utilitarian worker welfare $\sum_i W(O_i)$ at a cost to the supervisor of at most $\phi(r_L-r_H)$. Hence there exist parameters ($W$ sufficiently concave or $\phi$ sufficiently small) under which the equilibrium allocation is socially inefficient: a planner would assign the visible task to the more dependent worker, or rotate, and the unregulated supervisor does the opposite.
\end{proposition}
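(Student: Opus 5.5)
The argument is a one-period perturbation of the equilibrium path. I would fix a period $t$ in the trap configuration of Proposition~\ref{prop:trap}, with worker $2$ the more dependent one ($O_2<O_1$, equivalently $D_2>D_1$). I would compare the equilibrium allocation (worker $2$ does $N$, worker $1$ does $V$) with the swapped allocation in that period only, holding everything else fixed.

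\textbf{Worker side.} Under \eqref{eq:lom}, reading outside options as $O_i=-D_i$ up to an affine normalization, the equilibrium next-period options are $(O_1+\alpha,\ O_2-\gamma)$. The swap gives $(O_1-\gamma,\ O_2+\alpha)$. Each worker's option therefore moves by the same amount $\alpha+\gamma$, but in opposite directions. The change in $\sum_i W(O_i)$ is
\[
\big[W(O_2+\alpha)-W(O_2-\gamma)\big]-\big[W(O_1+\alpha)-W(O_1-\gamma)\big].
\]
This is the difference of increments of $W$ over two intervals of equal length $\alpha+\gamma$, and the interval for worker $2$ lies weakly to the left of worker $1$'s. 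Strict concavity, in the form of decreasing slopes of chords of equal length, makes the expression strictly positive whenever $O_2<O_1$. I would write it as $\int_{-\gamma}^{\alpha}[W'(O_2+s)-W'(O_1+s)]\,ds>0$.

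If the paper intends the comparison to cover the whole continuation path rather than one period, I would note two facts. Under the truncation of Remark~\ref{rem:bound} the paths stay in a compact set. The same chord argument applies period by period to the parallel shift, so discounted worker welfare also rises. I would keep the main statement to the single-period version to avoid tracking re-sorting.

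\textbf{Supervisor side.} By Lemma~\ref{lem:super}, expected friction is $\phi\,r(D)$ for whoever does $N$. Moving $N$ from a $\Dh$ worker to a $\Dl$ worker raises expected friction by $\phi(r_L-r_H)$. If types coincide, or the comparison is in expectation over beliefs, the increase is smaller. Hence the cost is at most $\phi(r_L-r_H)$. Output is unchanged because both tasks are completed.

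\textbf{Existence of parameters.} A planner with weight $\lambda>0$ on worker welfare prefers the swap whenever $\lambda\cdot(\text{welfare gain})>\phi(r_L-r_H)$. The gain is a fixed positive number for given $W$ and $O_1\ne O_2$. I would then exhibit the two limits named in the statement. First, letting $\phi\to0$ with everything else fixed makes the inequality hold, and equilibrium play is unaffected, since by Corollary~\ref{cor:cs}(c) the allocation rule depends only on $r_L>r_H$. Second, scaling concavity, e.g.\ replacing $W$ by $W+\kappa\,\psi$ with $\psi$ strictly concave and $\kappa\to\infty$, makes the gain grow without bound while the cost stays fixed. The unregulated supervisor ignores $W$, so she keeps $N$ on the dependent worker, which is the inefficient choice.

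\textbf{Main obstacle.} The delicate step is the worker-side sign. The increments have equal length only because the two workers' options move by $+\alpha$ and $-\gamma$ symmetrically under the swap. I would state the mapping $O=-D$ explicitly and check that the truncation at $D_{\min},D_{\max}$ does not break this. The fallback is to assume the trap state is interior in that period, or to use the weak inequality from concavity combined with strictness at one interior point.
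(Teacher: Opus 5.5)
Your proposal is correct and follows the paper's own argument. Both swap $V$ and $N$ for one period, so the dependent worker's outside option rises by $\alpha+\gamma$ and the other's falls by the same amount; both use strict concavity to sign the welfare gain, charge the supervisor $\phi(r_L-r_H)$, and let small $\phi$ or strong concavity make the gain dominate. Your exact identity $\int_{-\gamma}^{\alpha}[W'(O_2+s)-W'(O_1+s)]\,ds>0$ makes rigorous the paper's first-order approximation $(\alpha+\gamma)\big(W'(O_{\text{dep}})-W'(O_{\text{indep}})\big)$, and your caveat about truncation at $D_{\min},D_{\max}$ (where the increments need not have equal length when $\alpha\neq\gamma$) covers a point the paper's proof passes over.
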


\begin{proof}
By \eqref{eq:lom}, the visible task raises the recipient's outside option by $\alpha$ and the invisible task lowers it by $\gamma$. Moving $V$ from the low-dependence worker (high $O$, hence low $W'(O)$ by concavity) to the high-dependence worker (low $O$, high $W'(O)$), and $N$ the other way, therefore raises the dependent worker's outside option by $\alpha+\gamma$ and lowers the other's by $\alpha+\gamma$, changing utilitarian welfare by approximately $(\alpha+\gamma)\big(W'(O_{\text{dep}})-W'(O_{\text{indep}})\big)>0$ under strict concavity. The only cost to the supervisor is that $N$ now falls on the less dependent worker, raising expected friction by $\phi(r_L-r_H)$. Whenever the welfare gain exceeds this friction cost (guaranteed for $W$ concave enough or $\phi$ small enough), the reallocation raises the planner's objective. The supervisor, who ignores $\sum_i W$, does not undertake it. Thus the equilibrium is inefficient on a nonempty parameter set.
\end{proof}

\noindent The wedge is an externality: the supervisor internalizes friction but not the worker's concave welfare from human-capital accumulation. This is what makes the ``trap'' a genuine inefficiency rather than mere sorting, and it is distributionally regressive: visible work flows to the worker who already has the strongest options.

\subsection{Correctives}

\begin{proposition}[Recognition]\label{prop:recognition}
Attaching recognition or promotability to invisible work mitigates the trap through two channels. (i) Raising $w_N$ toward $w_V$ shrinks $\Delta$, which lowers the threshold $D^{*}=\Delta/2\rho$ and lowers the per-period welfare loss $\Delta$ from misallocation; for $\Delta$ small enough the economy leaves the separation regime for pooling on accommodation (Proposition~\ref{prop:regimes}(iii)), eliminating sorting. (ii) Reducing the crowd-out $\gamma$ in \eqref{eq:lom} flattens the dynamic divergence in Proposition~\ref{prop:trap}, whose per-period increment is $\alpha+\gamma$; the gap stops widening once $\gamma\le-\alpha$. If recognition makes invisible work itself career-building ($\gamma<0$), the trapped worker's dependence falls every period, so they eventually cross below $D^{*}$ and the configuration ceases to be absorbing.
\end{proposition}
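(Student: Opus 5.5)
The plan is to treat the two channels separately. Each is a direct consequence of earlier results once the recognition intervention is written as a change of primitives: channel (i) is the move $w_N\uparrow$, hence $\Delta\downarrow$; channel (ii) is the move $\gamma\downarrow$ in \eqref{eq:lom}. Throughout, $\rho$, $\Dl$, $\Dh$, $\alpha$ and the supervisor's problem are held fixed. Lemma~\ref{lem:super} does not involve $\Delta$ or $\gamma$, so the allocation rule, and hence $q=\tfrac12$ from Lemma~\ref{lem:probs}, is unchanged by the intervention. This invariance lets me reuse \eqref{eq:threshold} and Proposition~\ref{prop:regimes} verbatim.

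For (i), I first note that $\Delta=w_V-w_N$ is strictly decreasing in $w_N$. By \eqref{eq:threshold}, $D^{*}=\Delta/2\rho$ is strictly increasing in $\Delta$ (Corollary~\ref{cor:cs}(b)), so raising $w_N$ lowers $D^{*}$. For the welfare claim, a period in which the dependent worker receives $N$ rather than $V$ costs that worker $w_V-w_N=\Delta$ in task payoff. This per-period loss is therefore $\Delta$ itself and falls one-for-one with $w_N$. For the regime claim, I pick any $\Delta<2\rho\Dl$, which is attainable by taking $w_N$ close enough to $w_V$ since $\Dl>0$. This gives $D^{*}<\Dl$, and Proposition~\ref{prop:regimes}(iii) (with the stated off-path criterion) delivers pooling on $A$. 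In that regime messages are uninformative, $N$ goes by coin flip, and $\Pr(N\mid m_i)$ no longer depends on type, so sorting disappears. I would also remark that along the path from the calibrated point the economy passes from separation into pooling on $A$, as in Figure~\ref{fig:regimes}(b).

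For (ii), I will rerun the differencing step of Proposition~\ref{prop:trap}(ii) with general $\gamma$. As long as the separated configuration persists, the gap obeys $D_{2,t+1}-D_{1,t+1}=(D_{2,t}-D_{1,t})+(\alpha+\gamma)$. Smaller $\gamma$ therefore flattens the divergence, and the gap is nonincreasing exactly when $\alpha+\gamma\le0$, i.e.\ $\gamma\le-\alpha$.

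The step needing the most care is the final claim for $\gamma<0$. Here the induction of Proposition~\ref{prop:trap}(iii) breaks: the trapped worker's update $D_{2,t+1}=D_{2,t}+\gamma$ no longer preserves $D_{2,t+1}\ge D^{*}$. I will show that while worker~2 stays at or above $D^{*}$ they keep accommodating and doing $N$. In the bounded model of Remark~\ref{rem:bound}, their dependence is $\max\{D_{\min},D_{2,0}+\gamma t\}$. Provided $D_{\min}<D^{*}$, this falls strictly below $D^{*}$ within $\lceil (D_{2,0}-D^{*})/|\gamma|\rceil+1$ periods. At that point both workers lie below $D^{*}$, both send $B$, and $N$ is reassigned by coin flip. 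The separated state is therefore left in finite time and is not absorbing. The caveat $D_{\min}<D^{*}$, which is already assumed in Proposition~\ref{prop:lockin}, is what the argument needs, and I would state it explicitly.
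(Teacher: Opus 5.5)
Your proposal is correct and follows the paper's proof essentially step for step. For channel (i) you use $\partial D^{*}/\partial\Delta>0$ and invoke Proposition~\ref{prop:regimes}(iii) once $D^{*}<\Dl$. For channel (ii) you use the gap increment $\alpha+\gamma$ and, when $\gamma<0$, show the trapped worker's dependence falls below $D^{*}$ in finitely many periods, under the same caveat $D^{*}>D_{\min}$. Your explicit crossing-time bound and the observation that both workers then send $B$ fill in details the paper leaves implicit.
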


\begin{proof}
Channel (i): $\partial D^{*}/\partial\Delta>0$ from \eqref{eq:threshold}, and the per-period payoff gap between tasks is $\Delta$ by definition; as $\Delta\downarrow$, $D^{*}\downarrow$, so eventually $D^{*}<\Dl$ and Proposition~\ref{prop:regimes}(iii) applies. Channel (ii): in the proof of Proposition~\ref{prop:trap}, the gap increment is $(\alpha+\gamma)$; reducing $\gamma$ shrinks it, and it is nonpositive iff $\gamma\le-\alpha$. If $\gamma<0$, then $D_{2,t+1}=D_{2,t}+\gamma<D_{2,t}$, so after finitely many periods $D_{2,t}<D^{*}$ (given $D^{*}>D_{\min}$), and the inequalities $D_{1,t}<D^{*}\le D_{2,t}$ that make the configuration absorbing fail.
\end{proof}

\begin{proposition}[Rotation and reduced discretion]\label{prop:rotation}
Suppose the invisible task is assigned by an exogenous rule (fixed rotation or randomization) rather than by supervisor discretion. Then the assignment is independent of the communication signal, so $q=0$, the net gain from accommodating becomes $\Phi(D)=\rho D>0$ for all types, and accommodation becomes allocation-irrelevant: every type accommodates to collect the relational benefit, but the message no longer affects who does $N$. Invisible work is shared by rule, so the dynamic divergence of Proposition~\ref{prop:trap} does not occur and the dependence gap does not amplify systematically. Committing ex ante to any allocation rule that does not condition on communication achieves the same.
\end{proposition}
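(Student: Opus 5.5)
The plan is to substitute an exogenous assignment rule into the worker's best-response calculation of Section~\ref{sec:equilibrium}, then feed the resulting allocation into the law of motion \eqref{eq:lom}. First, formalize the rule as a (possibly history- or calendar-dependent) map that gives worker~$1$ the task $N$ with probability $a_t$ depending only on $t$, and possibly on an independent randomization device. It does not depend on $(m_1,m_2)$. Then, for each worker and each opponent strategy, $\Pr(N\mid m_i=A)=\Pr(N\mid m_i=B)$, because the assignment is measurable with respect to information independent of the messages. The assignment sensitivity of Lemma~\ref{lem:probs} is therefore $q=0$.

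Substituting $q=0$ into \eqref{eq:netgain} gives $\Phi(D)=\rho D-\Delta\cdot 0=\rho D>0$ for every $D>0$, since $\rho>0$ and $\Dl>0$. Hence $A$ is strictly dominant for every type, whatever the opponent does. The unique equilibrium of the stage game is therefore pooling on $A$. The supervisor's beliefs after $A$ equal the prior, but this is irrelevant because she does not choose the allocation. The message no longer affects who does $N$, which establishes the allocation-irrelevance claim. The same argument shows that any ex ante commitment to a message-independent rule yields $q=0$, so nothing beyond measurability with respect to non-message information is used.

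For the dynamics, I will treat fixed rotation and randomization separately. Under strict alternation, each worker's dependence changes by $\gamma-\alpha$ over every two periods. The gap $D_{2,t}-D_{1,t}$ then oscillates within a band of width $\alpha+\gamma$ around its initial value rather than growing by $\alpha+\gamma$ per period, so the divergence of Proposition~\ref{prop:trap}(ii) fails. Under fair randomization, the per-period gap increment is $\pm(\alpha+\gamma)$ with equal probability, independent of the state. The gap is then a mean-zero martingale, so there is no systematic amplification, and the absorbing configuration of Proposition~\ref{prop:trap}(iii) is never entered because assignment ignores the threshold.

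The main obstacle is making "does not amplify systematically" precise, especially for randomization, where a random walk still has unbounded variance. I would phrase the claim as zero expected drift, $\E[\Delta_{t+1}-\Delta_t\mid\Delta_t]=0$, in contrast with the deterministic drift $\alpha+\gamma$ under discretion. For rotation, I would state the bounded-oscillation claim explicitly. Under the truncation of Remark~\ref{rem:bound}, I would check that the endpoints do not reintroduce drift. Truncation only shrinks increments toward the interior, so it does not create a trap.
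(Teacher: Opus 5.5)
Your proposal is correct and follows the paper's argument. Message-independence of the rule gives $\Pr(N\mid A)=\Pr(N\mid B)$, hence $q=0$ and $\Phi(D)=\rho D>0$, so $A$ is strictly dominant with no allocative effect, and substituting the rule into \eqref{eq:lom} removes the divergence of Proposition~\ref{prop:trap}.

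Your dynamic step is sharper than the paper's one-line appeal to rules under which ``each worker does $N$ a fixed fraction of periods''. You separate bounded oscillation under rotation from a zero-drift gap under fair randomization, where equal receipt holds only in expectation. You also rightly confine the ``any message-independent rule'' clause to the $q=0$ conclusion, since a rule such as always assigning $N$ to worker~2 ignores messages yet reproduces the divergence. One small correction: truncation does add drift near the bounds, but because clipping is $1$-Lipschitz and monotone, that drift only pulls the gap toward zero.
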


\begin{proof}
If $a(\cdot)$ does not depend on $(m_1,m_2)$, then $\Pr(N\mid A)=\Pr(N\mid B)$, so $q=0$ and \eqref{eq:netgain} gives $\Phi(D)=\rho D>0$: every type strictly prefers $A$, but the choice does not affect the allocation. Under a rule that equalizes $N$ across workers over time, each worker does $N$ a fixed fraction of periods; substituting equal task receipt into \eqref{eq:lom} removes the systematic divergence, since neither worker is persistently assigned $N$.
\end{proof}

\begin{remark}[Scope and what the model does not claim]\label{rem:scope}
The model is deliberately minimal and makes no claim of novelty about the existence of invisible work or of non-promotable-task inequality, both of which are established. It does not assume taste-based discrimination, gender, or productivity differences; introducing any of these is orthogonal to the mechanism. The relational benefit $g(D)=\rho D$ is a reduced form (Remark~\ref{rem:microfound}). The contribution is the mechanism (dependence shapes communication, communication shifts the supervisor's perceived friction cost, allocation feeds back onto dependence) and the conditions under which it produces a self-reinforcing trap.
\end{remark}

\begin{remark}[Micro-founding the relational benefit]\label{rem:microfound}
The term $\rho D$ can be derived from an explicit continuation stage in which, after the period, the supervisor decides whether to renew the relationship or supply a favorable reference, and a worker perceived as ``difficult'' faces a higher probability of an unfavorable outcome whose cost to the worker scales with their dependence $D$ (a weaker outside option makes non-renewal more painful). Accommodation lowers that probability. Folding the continuation value into the stage payoff yields a relational benefit increasing in $D$, of which $\rho D$ is the linear case. The main text keeps the reduced form to isolate the signaling mechanism.
\end{remark}

\section{A Numerical Illustration}\label{sec:calibration}

The analysis closes with a parameterized example. The numbers are \emph{illustrative, not estimated}: the paper uses no data, and the example serves only to confirm that the model's conditions are mutually satisfiable and to display the magnitudes it implies. Normalize $w_V=1$, $w_N=0$ (so $\Delta=1$) and $\rho=1$; Table~\ref{tab:params} lists the rest.

\begin{table}[h]
\centering
\begin{tabular}{llc}
\hline
Symbol & Meaning & Value\\
\hline
$\Delta=w_V-w_N$ & value gap, visible vs.\ invisible task & $1$\\
$\rho$ & relational-stakes coefficient & $1$\\
$D_L,\,D_H$ & dependence types & $0.3,\ 0.8$\\
$\pi$ & prior $\Pr(D=D_H)$ & $0.5$\\
$r_L,\,r_H$ & resistance probabilities & $0.4,\ 0.1$\\
$\phi$ & supervisor friction cost & $0.1$\\
$\alpha,\,\gamma$ & visible/invisible effect on dependence & $0.1,\ 0.1$\\
$[D_{\min},D_{\max}]$ & state bounds & $[0.1,\ 0.9]$\\
$O(D)=1-D,\ W(O)=\sqrt{O}$ & outside option; worker welfare & n/a\\
$\delta$ & worker discount factor & $\{0.5,\ 0.9\}$\\
\hline
\end{tabular}
\caption{Illustrative parameter values.}\label{tab:params}
\end{table}

\paragraph{Static equilibrium.} The threshold is $D^{*}=\Delta/2\rho=0.5$, so \SEP\ holds: $D_L=0.3<0.5\le0.8=D_H$. Accommodation propensities are $\Phi(D_H)=\rho D_H-\tfrac{\Delta}{2}=0.3>0$ and $\Phi(D_L)=-0.2<0$, so the dependent type accommodates and the independent type does not. Invisible-work probabilities are $\Pr(N\mid A)=1-\pi/2=0.75$ against $\Pr(N\mid B)=(1-\pi)/2=0.25$.

\paragraph{Inefficiency.} Reallocating the visible task to the dependent worker, and the invisible task to the other, raises one-period welfare by $(\alpha+\gamma)\,[W'(0.2)-W'(0.7)]=0.2\,(1.118-0.598)\approx0.104$, at friction cost $\phi(r_L-r_H)=0.1\times0.3=0.03$. Since $0.104>0.03$, the equilibrium allocation is inefficient here. The cutoff is $\phi^{*}=0.104/0.3\approx0.347$: the allocation is inefficient when $\phi<\phi^{*}$ and efficient otherwise, so the model does not mechanically predict inefficiency (Proposition~\ref{prop:welfare}).

\paragraph{Forward-looking trap.} The immediate no-deviation margin is $\rho D_H-\tfrac{\Delta}{2}=0.3$, and the continuation bound is $\Lambda=\tfrac{\delta}{1-\delta}\,W'(O_{\min})\,\tfrac{\alpha+\gamma}{2}=\tfrac{\delta}{1-\delta}\times1.581\times0.1$. For $\delta=0.5$, $\Lambda\approx0.158<0.3$: condition \eqref{eq:fl} holds, so the trap survives a forward-looking worker's one-shot deviation. For $\delta=0.9$, $\Lambda\approx1.423>0.3$, so condition \eqref{eq:fl} no longer guarantees the trap. The cutoff is $\delta^{*}=\bar\delta(D_H)=0.3/(0.3+0.158)\approx0.655$. Dependent workers, who weight current survival heavily (low effective $\delta$), sit squarely in the region where \eqref{eq:fl} guarantees the trap (Remark~\ref{rem:delta}). The condition is conservative: solving the calibrated dynamic program directly shows that a one-shot deviation from the trap is unprofitable at every trapped state with $D\ge D^{*}+\alpha$ for every $\delta<1$, because with the bounded state the gain from a lower dependence path is temporary and is largely offset by the smaller relational benefit. The favored worker satisfies $D_1+\gamma=0.4<D^{*}$, as Proposition~\ref{prop:fl} requires.

\paragraph{Trajectory.} From an initial gap $D_{2,0}-D_{1,0}=0.5$, the unbounded benchmark of Proposition~\ref{prop:trap} widens the gap by $\alpha+\gamma=0.2$ each period ($0.5,\,0.7,\,0.9,\,1.1,\dots$). With the bounds, the trapped worker caps at $D_{\max}=0.9$ after one period and the favored worker floors at $D_{\min}=0.1$ after two, so the gap goes $0.5,\,0.7,\,0.8$ and then stays at $0.8$; the allocation is permanent.

\paragraph{Lock-in from equal starts.} Proposition~\ref{prop:lockin} applies with $k=5$. Solving the Markov chain on the grid of Table~\ref{tab:params} exactly, two workers who both start at dependence $0.3$ are locked in after $4$ periods in expectation, two who both start at $0.5$ after exactly one period, and two who both start at $0.8$ after $13$ periods in expectation; from every starting point, lock-in occurs with probability one.

\paragraph{The four panels.} Figure~\ref{fig:calib} collects the calibration. Panel~(a) is the trap itself: from the initial types the trapped worker is driven up to $D_{\max}$ and the favored worker down to $D_{\min}$ (solid), while the unbounded benchmark of Proposition~\ref{prop:trap} (dashed) would diverge without limit; the allocation never crosses back over the threshold $D^{*}$. Panel~(b) is the static separation: the accommodation incentive $\Phi(D)=\rho D-\tfrac{\Delta}{2}$ crosses zero at $D^{*}=0.5$, so $D_L$ stays bounded and $D_H$ accommodates. Panel~(c) plots the continuation bound $\Lambda(\delta)$ against the static margin $\rho D_H-\tfrac{\Delta}{2}=0.3$: condition \eqref{eq:fl} guarantees the trap for $\delta\le\delta^{*}\approx0.655$ (shaded), where dependent, survival-focused workers sit, but not at $\delta=0.9$ ($\Lambda\approx1.42$, off the chart); as noted above, the exact computation shows that the trap survives one-shot deviations there too. Panel~(d) plots the per-period welfare gain from reallocation against the supervisor's friction cost $\phi(r_L-r_H)=0.3\phi$: the equilibrium is inefficient for $\phi\le\phi^{*}\approx0.347$ (shaded), and the calibrated $\phi=0.1$ lies inside that region.

\begin{figure}[t]
\centering
\setlength{\tabcolsep}{2pt}
\begin{tabular}{cc}
\begin{tikzpicture}
\begin{axis}[calibfig, xlabel={period $t$}, ylabel={dependence $D_{i,t}$},
  xmin=0, xmax=6, ymin=0, ymax=1.15, xtick={0,1,2,3,4,5,6},
  ytick={0.1,0.5,0.9}, yticklabels={$D_{\min}$,$D^{*}$,$D_{\max}$}]
\draw[dotted] (axis cs:0,0.9) -- (axis cs:6,0.9);
\draw[dotted] (axis cs:0,0.1) -- (axis cs:6,0.1);
\draw[dashed] (axis cs:0,0.5) -- (axis cs:6,0.5);
\addplot[gray, dashed] coordinates {(0,0.8)(1,0.9)(2,1.0)(3,1.1)};
\addplot[gray, dashed] coordinates {(0,0.3)(1,0.2)(2,0.1)(3,0.0)};
\addplot[mark=*, mark size=1.2pt] coordinates {(0,0.8)(1,0.9)(2,0.9)(3,0.9)(4,0.9)(5,0.9)(6,0.9)};
\addplot[mark=square*, mark size=1.1pt] coordinates {(0,0.3)(1,0.2)(2,0.1)(3,0.1)(4,0.1)(5,0.1)(6,0.1)};
\node[font=\tiny, anchor=north east] at (axis cs:5.9,0.87) {trapped (does $N$)};
\node[font=\tiny, anchor=south east] at (axis cs:5.9,0.13) {favored (does $V$)};
\end{axis}
\end{tikzpicture}
&
\begin{tikzpicture}
\begin{axis}[calibfig, xlabel={dependence $D$}, ylabel={$\Phi(D)=\rho D-\Delta/2$},
  xmin=0, xmax=1, ymin=-0.55, ymax=0.55,
  xtick={0,0.3,0.5,0.8,1}, xticklabels={$0$,$D_L$,$D^{*}$,$D_H$,$1$},
  ytick={-0.5,-0.2,0,0.3,0.5}]
\addplot[draw=none, fill=black!7] coordinates {(0.5,-0.55)(1,-0.55)(1,0.55)(0.5,0.55)} \closedcycle;
\draw[gray] (axis cs:0,0) -- (axis cs:1,0);
\draw[dashed] (axis cs:0.5,-0.55) -- (axis cs:0.5,0.55);
\addplot[domain=0:1, samples=2] {x-0.5};
\addplot[only marks, mark=*, mark size=1.4pt] coordinates {(0.3,-0.2)(0.8,0.3)};
\node[font=\tiny, anchor=north] at (axis cs:0.35,-0.31) {bounded ($B$)};
\node[font=\tiny, anchor=south west] at (axis cs:0.52,0.40) {accommodate ($A$)};
\end{axis}
\end{tikzpicture}
\\[-3pt]
{\footnotesize (a) The dynamic trap} & {\footnotesize (b) Static separation} \\[5pt]
\begin{tikzpicture}
\begin{axis}[calibfig, xlabel={discount factor $\delta$}, ylabel={continuation bound $\Lambda(\delta)$},
  xmin=0, xmax=0.85, ymin=0, ymax=0.6,
  xtick={0,0.5,0.655,0.8}, xticklabels={$0$,$0.5$,$\delta^{*}$,$0.8$},
  ytick={0,0.158,0.3}, yticklabels={$0$,$0.16$,$0.30$}]
\addplot[draw=none, fill=black!7] coordinates {(0,0)(0.655,0)(0.655,0.6)(0,0.6)} \closedcycle;
\draw[dashed] (axis cs:0,0.3) -- (axis cs:0.85,0.3);
\draw[dotted] (axis cs:0.655,0) -- (axis cs:0.655,0.3);
\addplot[domain=0:0.84, samples=120] {0.158113883*x/(1-x)};
\addplot[only marks, mark=*, mark size=1.4pt] coordinates {(0.5,0.158)};
\node[font=\tiny, anchor=west] at (axis cs:0.03,0.46) {\eqref{eq:fl} holds};
\node[font=\tiny, anchor=south west] at (axis cs:0.03,0.305) {static margin $0.3$};
\end{axis}
\end{tikzpicture}
&
\begin{tikzpicture}
\begin{axis}[calibfig, xlabel={friction cost $\phi$}, ylabel={per-period value},
  xmin=0, xmax=0.45, ymin=0, ymax=0.145,
  xtick={0,0.1,0.347,0.45}, xticklabels={$0$,$0.1$,$\phi^{*}$,$0.45$},
  ytick={0,0.104,0.135}, yticklabels={$0$,$0.104$,$0.135$}]
\addplot[draw=none, fill=black!7] coordinates {(0,0)(0.347,0)(0.347,0.145)(0,0.145)} \closedcycle;
\addplot[domain=0:0.45, samples=2] {0.104};
\addplot[domain=0:0.45, samples=2, dashed] {0.3*x};
\draw[dotted] (axis cs:0.347,0) -- (axis cs:0.347,0.104);
\addplot[only marks, mark=*, mark size=1.4pt] coordinates {(0.1,0.03)};
\node[font=\tiny, anchor=south] at (axis cs:0.12,0.105) {welfare gain};
\node[font=\tiny, anchor=south east] at (axis cs:0.38,0.12) {friction cost};
\node[font=\tiny, anchor=south west] at (axis cs:0.005,0.128) {inefficient};
\end{axis}
\end{tikzpicture}
\\[-3pt]
{\footnotesize (c) Condition \eqref{eq:fl}, vs.\ patience $\delta$} & {\footnotesize (d) Inefficiency, vs.\ friction $\phi$} \\
\end{tabular}
\caption{Numerical illustration, parameters of Table~\ref{tab:params}. \textbf{(a)} The dynamic trap: solid lines are bounded dependence paths (Remark~\ref{rem:bound}), the dashed lines the unbounded benchmark of Proposition~\ref{prop:trap}; the dashed horizontal line is the threshold $D^{*}$ and the dotted lines the bounds. \textbf{(b)} The accommodation incentive $\Phi(D)$ is increasing and crosses zero at $D^{*}=0.5$; the dependent type $D_H$ sends $A$, the independent type $D_L$ sends $B$. \textbf{(c)} Condition \eqref{eq:fl} holds where $\Lambda(\delta)\le\rho D_H-\tfrac{\Delta}{2}=0.3$, i.e.\ $\delta\le\delta^{*}\approx0.655$ (shaded); the dependent type ($\delta=0.5$) is inside, the secure type ($\delta=0.9$, $\Lambda\approx1.42$) outside. The condition is sufficient, not necessary: in this calibration a one-shot deviation from the trap is unprofitable for every $\delta<1$. \textbf{(d)} The equilibrium is inefficient where the welfare gain ($\approx0.104$) exceeds the friction cost $\phi(r_L-r_H)=0.3\phi$, i.e.\ $\phi\le\phi^{*}\approx0.347$ (shaded).}
\label{fig:calib}
\end{figure}

\section{Discussion}\label{sec:discussion}

\paragraph{Inequality without malice.} The sharpest reading of the model is that unequal allocation of invisible work can arise with a fully rational, non-prejudiced supervisor and equally productive workers. The supervisor engages in statistical inference on dependence, not favoritism; the worker accommodates rationally to protect a relationship they depend on. Neither behaves badly, yet the outcome is persistent inequality. This both distinguishes the mechanism from taste-based accounts and makes it harder to detect and to remedy: there is no smoking gun, only incentives.

\paragraph{When is the trap strongest?} Corollary~\ref{cor:cs} and Proposition~\ref{prop:trap} imply the trap is most severe when relational stakes are high ($\rho$ large), invisible work strongly crowds out career capital ($\gamma$ large), the supervisor gains more from sorting ($r_L-r_H$ large), and allocation is discretionary. In organizational terms: high supervisor discretion, weak formal rotation, ambiguous job descriptions, informal relational contracts, and workers whose status, funding, insurance, or solvency is tied to the job. These are testable scope conditions.

\paragraph{Relation to accommodation as signaling.} The model formalizes \emph{accommodation signaling}: communication that lowers perceived refusal risk and relational friction. Unlike impression management aimed at supervisor liking, the payoff-relevant consequence here is task allocation, and unlike a fixed disposition, the signal is a deliberate choice whose informativeness is endogenous. The Spencian irony is that the signal is self-harming: it sorts the sender into the worse allocation.

\section{Designs for Future Empirical Testing}\label{sec:empirics}

The paper is theoretical; this section records designs that would discipline it.

\paragraph{Supervisor vignette experiment (tests Proposition~\ref{prop:alloc}).} Participants act as supervisors and allocate one visible and one tedious task between two otherwise-equal workers whose only difference is communication style (bounded vs.\ accommodative). Manipulate, between subjects, the workers' stated outside option, the presence of a rotation rule, and whether invisible work is recognized. The prediction is that the accommodator receives the tedious task more often, that the effect strengthens under high discretion and weakens under rotation/recognition, and that it survives equalizing prior performance.

\paragraph{Survey (tests Proposition~\ref{prop:dep}).} Measure financial, visa, and insurance dependence and a family safety net alongside self-reported communication style and the frequency of invisible-work assignment and public recognition. The prediction is that weaker outside options correlate with more accommodative language and with more invisible-work receipt.

\paragraph{Repeated lab game (tests Proposition~\ref{prop:trap}).} Implement the dynamic game with endowment-based outside options that evolve with task receipt. The prediction is that low-endowment workers develop accommodative strategies and accumulate invisible work, and that the dependence gap widens across rounds.

\paragraph{Communication measurement.} Using consented or simulated workplace messages, construct accommodation markers (length, apologies, gratitude, hedging, availability language, boundary clarity) and relate them to subsequent assignments. Any use of real correspondence requires anonymization, consent, and ethics approval.

\section{Conclusion}\label{sec:conclusion}

This paper has argued that the communication strategy protecting a dependent worker's short-run employment relationship can expose them to long-run inequality in task allocation. Weak outside options make accommodation rational; accommodation lowers a supervisor's perceived cost of assigning invisible work; invisible work erodes future outside options and sustains the dependence that began the cycle. The result is an accommodation trap that emerges without discrimination or error, that locks in even workers who start out identical, and that is inefficient whenever diminishing returns to outside options outweigh the supervisor's friction cost. The signal that sustains it harms the worker who sends it, and because the supervisor's beliefs are correct, the remedy lies in rules and incentives rather than in correcting beliefs: recognition, rotation, and reduced discretion break the trap by severing the link between how a worker communicates and what work they are given. The mechanism is general; the model is a first, deliberately minimal formalization, and the designs above indicate how it could be tested.

\section*{Statements and Declarations}

\paragraph{Competing interests.} The author has no competing interests to declare that are relevant to the content of this article.

\paragraph{Funding.} The author did not receive support from any organization for the submitted work.

\paragraph{Data availability.} Data sharing is not applicable to this article, as no datasets were generated or analysed during the current study.

\appendix
\theoremstyle{plain}
\newtheorem{propA}{Proposition}[section]
\newtheorem{lemA}{Lemma}[section]
\theoremstyle{remark}
\newtheorem{remA}{Remark}[section]

\section{Mathematical Appendix}\label{app:math}

This appendix supplies the derivations compressed in the main text: the assignment probabilities and the robustness of $q>0$ (\S\ref{app:q}); the complete three-regime characterization, including the pooling-on-accommodation case and the off-path beliefs that support each regime (\S\ref{app:regimes}); the full forward-looking condition, deriving the continuation bound $\Lambda$ from primitives and stating a sharper sufficient condition (\S\ref{app:fl}); a micro-foundation of the relational benefit $g(D)=\rho D$ from an explicit renewal subgame (\S\ref{app:micro}); and an existence proof for the continuum-type cutoff (\S\ref{app:cont}). Notation is as in the main text; $\Pr(N\mid m)$ denotes the equilibrium probability that a worker sending message $m$ receives the invisible task.

\subsection{Assignment probabilities and the robustness of \texorpdfstring{$q>0$}{q>0}}\label{app:q}

Fix worker $i$ and let the opponent send $A$ with probability $p\in[0,1]$ (in the separating equilibrium $p=\pi$). Let $\tau\in[0,1]$ be the probability the supervisor assigns $N$ to worker $i$ when the two messages tie and posteriors are equal; the main text uses the fair coin $\tau=\tfrac12$.

\begin{lemA}[Assignment probabilities, general tie-break]\label{lemA:q}
Under Lemma~\ref{lem:super},
\[
\Pr(N\mid A)=p\,\tau+(1-p),\qquad
\Pr(N\mid B)=(1-p)\,\tau,
\]
so that
\[
q\;\equiv\;\Pr(N\mid A)-\Pr(N\mid B)\;=\;p\,\tau+(1-p)(1-\tau).
\]
In particular, under the fair coin $\tau=\tfrac12$ one has $q=\tfrac12$ \emph{for every} $p$; and for any $\tau\in(0,1)$ one has $q>0$. The knife-edge $q=0$ requires $(p,\tau)\in\{(1,0),(0,1)\}$, i.e.\ a degenerate opponent combined with a tie-break that perfectly offsets it.
\end{lemA}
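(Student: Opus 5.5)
The plan is to condition on the opponent's message, exactly as in the proof of Lemma~\ref{lem:probs}, but now keeping the opponent's accommodation probability $p$ and the tie-break $\tau$ general.

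\textbf{Step 1: the four message profiles.} Under Lemma~\ref{lem:super} with separating beliefs, the allocation on each profile is fixed:
\begin{itemize}
\item on $(A,B)$ the accommodator receives $N$ with probability one;
\item on $(B,A)$ the bounded worker receives $N$ with probability zero;
\item on the ties $(A,A)$ and $(B,B)$, posteriors are equal and worker $i$ receives $N$ with probability $\tau$.
\end{itemize}
Averaging over the opponent's message, which is $A$ with probability $p$ and $B$ with probability $1-p$, gives
\[
\Pr(N\mid A)=p\,\tau+(1-p)\cdot 1, \qquad \Pr(N\mid B)=p\cdot 0+(1-p)\,\tau .
\]

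\textbf{Step 2: the expression for $q$.} Subtracting, $q=p\tau+(1-p)-(1-p)\tau=p\tau+(1-p)(1-\tau)$, which is the stated formula. At $\tau=\tfrac12$ this reduces to $\tfrac12\,(p+1-p)=\tfrac12$ for every $p$, recovering Lemma~\ref{lem:probs} as a special case.

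\textbf{Step 3: positivity and the knife-edge.} Both terms $p\tau$ and $(1-p)(1-\tau)$ are nonnegative, so $q\ge 0$, and $q=0$ holds if and only if both terms vanish.
\begin{itemize}
\item If $\tau\in(0,1)$, then $p\tau=0$ forces $p=0$, while $(1-p)(1-\tau)=0$ forces $p=1$. These cannot hold together, so $q>0$.
\item For general $\tau\in[0,1]$, $p\tau=0$ requires $p=0$ or $\tau=0$, and $(1-p)(1-\tau)=0$ requires $p=1$ or $\tau=1$. The compatible combinations are $(p,\tau)=(0,1)$ and $(1,0)$; the remaining combinations $p=0,p=1$ and $\tau=0,\tau=1$ are each contradictory.
\end{itemize}
This is exactly the stated knife-edge set.

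\textbf{Where care is needed.} There is no real obstacle; the argument is direct computation. The one point to handle carefully is the interpretation of $\tau$ as worker $i$'s tie-break probability on \emph{both} tie profiles $(A,A)$ and $(B,B)$. If the tie-break could differ across the two tie profiles, the formula would change, so I will state explicitly that a single $\tau$ governs both ties, as in the lemma's setup.
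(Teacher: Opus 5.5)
Your proposal is correct and matches the paper's proof essentially step for step: condition on the opponent's message, read off the allocation on the four profiles from Lemma~\ref{lem:super} with a common tie-break $\tau$, subtract, and use that both terms $p\tau$ and $(1-p)(1-\tau)$ are nonnegative. Your Step~3 is slightly more complete than the paper's, which argues positivity only for interior $\tau$ and asserts rather than derives the knife-edge set $\{(0,1),(1,0)\}$ that you verify explicitly.
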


\begin{proof}
Condition on the opponent's message. If $i$ sends $A$: with probability $p$ the profile is $(A,A)$ (equal posteriors, so $i$ gets $N$ with probability $\tau$); with probability $1-p$ the profile is $(A,B)$, and since $A\Rightarrow$ higher posterior dependence, $i$ gets $N$ with probability $1$. Hence $\Pr(N\mid A)=p\tau+(1-p)$. If $i$ sends $B$: with probability $p$ the profile is $(B,A)$, so $N$ goes to the opponent and $i$ gets it with probability $0$; with probability $1-p$ the profile is $(B,B)$, probability $\tau$. Hence $\Pr(N\mid B)=(1-p)\tau$. Subtracting,
\[
q=p\tau+(1-p)-(1-p)\tau=p\tau+(1-p)(1-\tau).
\]
At $\tau=\tfrac12$, $q=\tfrac{p}{2}+\tfrac{1-p}{2}=\tfrac12$. For $\tau\in(0,1)$ both terms are nonnegative and at least one is strictly positive (since $p$ and $1-p$ cannot both vanish), so $q>0$.
\end{proof}

\noindent Lemma~\ref{lemA:q} shows the value $q=\tfrac12$ used in the body is not an artifact of the prior $\pi$: it holds for any opponent accommodation probability under a fair tie-break. What the body's results require is only $q>0$, which holds for all interior tie-breaks. With $n>2$ workers competing for a single invisible task the same argument gives $\Pr(N\mid A)=\sum_{k=0}^{n-1}\binom{n-1}{k}p^{k}(1-p)^{n-1-k}\,\tfrac{1}{k+1}$ and $\Pr(N\mid B)=(1-p)^{n-1}\tfrac1n$, again with $q>0$; the binary case is $n=2$.

\subsection{The three regimes in full}\label{app:regimes}

Write a worker's message payoffs, given own type $D$ and induced assignment probabilities, as
\[
U(A;D)=\rho D+w_V-\Delta\,\Pr(N\mid A),\qquad
U(B;D)=w_V-\Delta\,\Pr(N\mid B),
\]
so the accommodation incentive is
\begin{equation}\label{eqA:incentive}
U(A;D)-U(B;D)=\rho D-\Delta\,q .
\end{equation}
Off-path beliefs are refined by monotonicity: an unexpected $A$ is attributed to the most-dependent type and an unexpected $B$ to the least-dependent type. This is the natural belief here and is consistent with an intuitive-criterion argument, because by \eqref{eqA:incentive} the net benefit of sending $A$ is strictly increasing in $D$, so among types contemplating an out-of-equilibrium $A$ it is the high type whose deviation is (weakly) least unprofitable; symmetrically for $B$. Refinements matter because pooling equilibria are often supported by implausible off-path beliefs. \citet{roperogarcia2025} studies signaling games without single crossing in which several pooling equilibria survive divinity, and shows that neologism-proofness selects the plausible one; in the present model single crossing holds, so the simpler monotone refinement suffices.

\begin{propA}[Three regimes, complete]\label{propA:regimes}
Let $D^{*}=\Delta/2\rho$ as in \eqref{eq:threshold}, and adopt the monotone off-path beliefs above. Then:
\begin{enumerate}[label=\textnormal{(\roman*)},itemsep=1pt]
\item \emph{Separation.} If $\Dl<D^{*}\le\Dh$, the separating profile $\{\Dh\mapsto A,\ \Dl\mapsto B\}$ with the allocation of Lemma~\ref{lem:super} is a PBE.
\item \emph{Pooling on boundedness.} If $D^{*}>\Dh$, the profile in which both types send $B$ is a PBE.
\item \emph{Pooling on accommodation.} If $D^{*}<\Dl$, the profile in which both types send $A$ is a PBE.
\end{enumerate}
\end{propA}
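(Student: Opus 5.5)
The plan is to verify each regime directly as a PBE, checking supervisor sequential rationality, worker optimality, and Bayes consistency, with the off-path beliefs fixed by the monotone refinement. Supervisor optimality in every case comes from Lemma~\ref{lem:super}: given any posteriors, assigning $N$ to the worker with the higher posterior on $\Dh$ (coin flip on ties) minimizes expected friction. So the work lies in computing, for each conjectured profile, the induced assignment probabilities $\Pr(N\mid A)$ and $\Pr(N\mid B)$. These then go into the incentive expression \eqref{eqA:incentive}.

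Case (i) is Proposition~\ref{prop:dep}. I would recheck it with Lemma~\ref{lemA:q}, taking $p=\pi$ and $\tau=\tfrac12$, which gives $q=\tfrac12$. Then \eqref{eqA:incentive} reads $\rho D-\Delta/2$, and this is $\ge0$ at $\Dh$ and $<0$ at $\Dl$ exactly when $\Dl<D^{*}\le\Dh$. Both messages are on path, so no refinement is needed and beliefs follow from Bayes' rule.

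For (ii), both types send $B$, so the posterior after $B$ is the prior $\pi$. The message $A$ is off path and, by monotonicity, is read as $\Dh$. The equilibrium profile is $(B,B)$, which yields $N$ with probability $\tfrac12$. A unilateral deviation to $A$ creates the profile $(A,B)$. The deviator's posterior on $\Dh$ is $1>\pi$, so Lemma~\ref{lem:super} gives them $N$ for sure. The deviator thus effectively faces $p=0$, $\tau=\tfrac12$, so $\Pr(N\mid A)=1$, $\Pr(N\mid B)=\tfrac12$, and $q=\tfrac12$ again. The deviation gain is $\rho D-\Delta/2$, which is negative for both types because $D\le\Dh<D^{*}$.

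Case (iii) is symmetric. The posterior after $A$ is the prior, and an off-path $B$ is read as $\Dl$. Against an $A$-opponent the deviator's posterior on $\Dh$ is $0<\pi$, so they escape $N$. This gives $p=1$, $\Pr(N\mid A)=\tfrac12$, $\Pr(N\mid B)=0$, and $q=\tfrac12$. The gain from staying with $A$ is $\rho D-\Delta/2>0$ for both types because $D\ge\Dl>D^{*}$.

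The main obstacle is not the algebra but justifying the off-path beliefs and the strict posterior comparison. In (ii) the deviator's posterior must be compared with the opponent's pooled posterior $\pi$, not with $0$, and it is the strict inequality $\pi<1$, together with $\pi>0$ in (iii), that makes the assignment deterministic rather than a tie. I would state this explicitly. I would also argue that the monotone belief is consistent with the intuitive criterion: by \eqref{eqA:incentive} the gain from sending $A$ is strictly increasing in $D$, so if any type would profit from an off-path $A$ under some belief, the high type would. This pins the attribution. Finally, I would note the boundary case $D^{*}=\Dh$ in (i), where the high type is indifferent but weakly prefers $A$, which completes the case split.
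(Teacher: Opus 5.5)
Your proposal is correct and follows the paper's proof essentially step for step. In (i) you use Bayes-consistent beliefs with $q=\tfrac12$. In (ii)--(iii) you compare the unilateral deviation (read as $\Dh$, resp.\ $\Dl$) against the $(B,B)$, resp.\ $(A,A)$, coin flip, which reduces to comparing $\rho D$ with $\Delta/2$ at the binding type $\Dh$, resp.\ $\Dl$.

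Your explicit check that the deviator's posterior ($1$ or $0$) is compared with the pooled posterior $\pi\in(0,1)$ spells out a step the paper leaves implicit. One side remark is loose: the intuitive criterion does not ``pin'' the off-path belief. Under a response that spares the deviator $N$, both types would gain from an off-path $A$, so only a D1-type argument selects $\Dh$. Since the statement stipulates the monotone beliefs, this does not affect the proof.
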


\begin{proof}
\emph{(i)} On the separating path each message occurs with positive probability, so beliefs are Bayesian: $A\Rightarrow\Dh,\ B\Rightarrow\Dl$, giving $q=\tfrac12$ by Lemma~\ref{lemA:q}. By \eqref{eqA:incentive}, type $\Dh$ weakly prefers $A$ iff $\rho\Dh\ge\Delta/2$ and type $\Dl$ strictly prefers $B$ iff $\rho\Dl<\Delta/2$; both hold under $\Dl<D^{*}\le\Dh$. Supervisor optimality is Lemma~\ref{lem:super}. This reproduces Proposition~\ref{prop:dep}.

\emph{(ii)} Suppose both types send $B$. On-path the belief at $B$ is the prior, and the opponent sends $B$ with probability one. A deviation to $A$ is off-path; by the monotone belief it is read as $\Dh$, so against the $B$-opponent the deviator faces profile $(A,B)$ and receives $N$ with probability one. The deviator's payoff is therefore $\rho D+w_V-\Delta$, against the equilibrium payoff $w_V-\tfrac{\Delta}{2}$ from the $(B,B)$ tie. The deviation is unprofitable iff
\[
\rho D+w_V-\Delta\;\le\;w_V-\tfrac{\Delta}{2}
\quad\Longleftrightarrow\quad
\rho D\le\tfrac{\Delta}{2}
\quad\Longleftrightarrow\quad
D\le D^{*}.
\]
This must hold for the most-tempted (highest) type, so the binding requirement is $\Dh\le D^{*}$, i.e.\ $D^{*}\ge\Dh$, which is implied by $D^{*}>\Dh$. Hence both types send $B$ and the deviation is blocked.

\emph{(iii)} Suppose both types send $A$. On-path the belief at $A$ is the prior, and the opponent sends $A$ with probability one. A deviation to $B$ is off-path; by the monotone belief it is read as $\Dl$, so against the $A$-opponent the deviator faces profile $(B,A)$ and the supervisor assigns $N$ to the (more-dependent-looking) $A$-opponent: the deviator receives $N$ with probability zero and obtains $w_V$. The equilibrium payoff is $\rho D+w_V-\Delta\cdot\tfrac12$ from the $(A,A)$ tie. The deviation is unprofitable iff
\[
w_V\;\le\;\rho D+w_V-\tfrac{\Delta}{2}
\quad\Longleftrightarrow\quad
\rho D\ge\tfrac{\Delta}{2}
\quad\Longleftrightarrow\quad
D\ge D^{*}.
\]
This must hold for the least-incentivized (lowest) type, so the binding requirement is $\Dl\ge D^{*}$, i.e.\ $D^{*}\le\Dl$, which is implied by $D^{*}<\Dl$. Hence both types send $A$ and the deviation is blocked. This completes the case left as ``routine'' in the body.
\end{proof}

\noindent The three conditions partition the line: $D^{*}\le\Dl$ (pool on $A$), $\Dl<D^{*}\le\Dh$ (separate), $D^{*}>\Dh$ (pool on $B$). The boundary cases $D^{*}=\Dl$ and $D^{*}=\Dh$ admit both the adjacent pooling and separating profiles, the indifferent type randomizing.

\subsection{The forward-looking condition, derived}\label{app:fl}

This subsection derives \eqref{eq:fl} from primitives. Normalize the outside option $O(D)=\bar O-D$, so $|O'(D)|=1$ and worker flow welfare from security is $W(O(D))$ with $W$ increasing and strictly concave; hence $\frac{d}{dD}W(O(D))=-W'(O)$ and $W'(O)$ is largest at the smallest $O$, i.e.\ $W'(O)\le W'(O_{\min})\equiv\overline W'$. A worker maximizes
\begin{equation}\label{eqA:obj}
\E\sum_{t\ge0}\delta^{t}\,u_t,\qquad
u_t=\underbrace{W(O(D_t))}_{\text{security}}+\underbrace{\rho D_t\,\ind\{m_t=A\}}_{\text{relational}}+\underbrace{w_V-\Delta\,\ind\{N_t\}}_{\text{task}} ,
\end{equation}
with $D_t$ evolving by the bounded law of motion of Remark~\ref{rem:bound}. The relational term is normalized so that the bounded message yields zero. The stage game pins down only the difference $\rho D$ between the two messages, so this normalization is a modeling choice, and it matters for the continuation accounting below: with it, a lower dependence path lowers future relational benefits, a cost of deviating that is dropped here. If instead an accommodator's relational payoff fell with dependence at rate $c>0$, as it would under the levels of the renewal subgame in Appendix~\ref{app:micro}, the bound $\Lambda$ would use $\overline W'+c$ in place of $\overline W'$.

Consider a trapped worker at state $D\ge D^{*}+\alpha$ on the path where they send $A$ and receive $N$ every period (so $D_t$ drifts up by $\gamma$ until the cap $D_{\max}$). Consider a single deviation to $B$ at $D$, after which they revert to the trapped strategy.

\paragraph{Period-of-deviation change.} Sending $B$ instead of $A$ at $D$:
\begin{itemize}[itemsep=1pt,topsep=2pt]
\item forgoes the relational benefit $\rho D$;
\item changes the task lottery from the sure $N$ (payoff $w_V-\Delta$) to the $(B,B)$ tie, expected task payoff $\tfrac12 w_V+\tfrac12(w_V-\Delta)=w_V-\tfrac{\Delta}{2}$, a gain of $+\tfrac{\Delta}{2}$.
\end{itemize}
The net period-of-deviation change is therefore
\begin{equation}\label{eqA:period}
\tfrac{\Delta}{2}-\rho D .
\end{equation}

\paragraph{Continuation change.} On the trapped path the period increment to $D$ is $+\gamma$ (sure $N$). Under the deviation the increment is stochastic: $(B,B)$ gives $N$ with probability $\tfrac12$ (increment $+\gamma$) and $V$ with probability $\tfrac12$ (increment $-\alpha$), so the expected increment is $\tfrac{\gamma-\alpha}{2}$. The deviation thus lowers the period's expected increment by
\[
\gamma-\tfrac{\gamma-\alpha}{2}=\tfrac{\gamma+\alpha}{2}.
\]
Because $D\ge D^{*}+\alpha$ guarantees $D-\alpha\ge D^{*}$, the worker is still trapped next period and resumes the sure-$N$ path (and because the favored worker satisfies $D_1+\gamma<D^{*}$, their message is unchanged too); since the law of motion is additive, this one-time wedge shifts the \emph{entire} future expected $D$-path \emph{downward} by $\tfrac{\alpha+\gamma}{2}$, in every subsequent period until the path reaches the cap $D_{\max}$, which only shrinks the wedge. A downward shift of $D_t$ by $\varepsilon$ raises security flow by $-\frac{d}{dD}W(O(D))\,\varepsilon=W'(O_t)\,\varepsilon\le\overline W'\varepsilon$. Discounting from $t+1$,
\begin{equation}\label{eqA:cont}
\text{continuation welfare gain}\;=\;\sum_{s\ge1}\delta^{s}\,W'(O_{t+s})\,\tfrac{\alpha+\gamma}{2}
\;\le\;\frac{\delta}{1-\delta}\,\overline W'\,\frac{\alpha+\gamma}{2}\;=\;\Lambda .
\end{equation}
The same downward shift also \emph{lowers} future relational benefits $\rho D_{t+s}$ (a cost of deviating), which is omitted; omitting it only strengthens the no-deviation conclusion, which is why the resulting condition is sufficient rather than tight.

\paragraph{Combining.} Adding \eqref{eqA:period} and \eqref{eqA:cont}, the total gain from the deviation is at most $\tfrac{\Delta}{2}-\rho D+\Lambda$, which is $\le0$ exactly when
\[
\rho D\;\ge\;\tfrac{\Delta}{2}+\Lambda,
\qquad
\Lambda=\frac{\delta}{1-\delta}\,\overline W'\,\frac{\alpha+\gamma}{2},
\]
reproducing \eqref{eq:fl}. This proves Proposition~\ref{prop:fl}.

\begin{propA}[A sharper sufficient condition]\label{propA:tight}
Let $\{O_{t+s}\}_{s\ge1}$ be the on-path (truncated) outside-option sequence from the trapped state $D$, and let $g_{t+s}\ge0$ denote the per-period reduction in relational benefit caused by the downward path-shift. The deviation is unprofitable whenever
\[
\rho D\;\ge\;\tfrac{\Delta}{2}\;+\;\frac{\alpha+\gamma}{2}\sum_{s\ge1}\delta^{s}W'(O_{t+s})\;-\;\sum_{s\ge1}\delta^{s}g_{t+s}.
\]
Replacing $W'(O_{t+s})$ by its upper bound $\overline W'$ and dropping the (nonnegative) relational-loss term $\sum\delta^{s}g_{t+s}$ yields the sufficient condition \eqref{eq:fl}; hence \eqref{eq:fl} is conservative, and the true escape margin is weakly smaller.
\end{propA}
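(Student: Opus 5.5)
The plan is to write the deviator's total payoff change as the exact sum of three pieces and then bound the continuation pieces. The trapped worker sits at state $D\ge D^{*}+\alpha$ and makes a single deviation to $B$ before returning to the trap. The three pieces are the period-of-deviation change, the discounted change in security welfare, and the discounted change in relational benefit. For the first piece I reuse \eqref{eqA:period}: the deviator forgoes $\rho D$ and gains $\tfrac{\Delta}{2}$ from the $(B,B)$ coin flip of Lemma~\ref{lem:super}.

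For the continuation, I first check that play after the deviation is unchanged apart from the state. Since $D-\alpha\ge D^{*}$, the deviator remains at or above the threshold whatever the flip gives and resumes sending $A$. Since $D_1+\gamma<D^{*}$, the favored worker stays below the threshold and keeps sending $B$. So from $t+1$ on the deviator does $N$ every period, exactly as on path, and the two paths differ only through the state. By additivity of \eqref{eq:lom}, the deviation path lies below the on-path sequence $\{D_{t+s}\}$: the shift is $\alpha+\gamma$ with probability $\tfrac12$ and $0$ otherwise. Truncation at $D_{\max}$ can only shrink this shift, since both paths climb toward the same cap, and the deterministic path is unaffected by the floor.

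Next I bound the security gain in each period $s\ge1$. Let $\varepsilon_{t+s}$ denote the realized downward shift of the deviation path at $t+s$. The security gain in that period is $W(O(D_{t+s}-\varepsilon_{t+s}))-W(O(D_{t+s}))$. By concavity of $W$ this is at most $W'(O_{t+s})\,\varepsilon_{t+s}$, with $W'$ evaluated at the on-path outside option, which is the smaller one. Taking expectations gives at most $\tfrac{\alpha+\gamma}{2}W'(O_{t+s})$. Summing with discounting yields the middle term of the stated condition.

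For the relational piece, I use that the deviator sends $A$ in every period $s\ge1$ on both paths. The relational difference is then $\rho$ times the expected shift, and it is a loss. I define $g_{t+s}\ge0$ as this loss; it is at least nonnegative because the shift is weakly downward. Subtracting $\sum_{s\ge1}\delta^{s}g_{t+s}$ from the sum gives the stated condition as an upper bound on the deviation gain. When that condition holds, the gain is $\le0$.

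The final claim follows by monotone replacement. Every term satisfies $W'(O_{t+s})\le\overline W'$, and the $g$ term is nonnegative. Replacing $W'(O_{t+s})$ by $\overline W'$ and dropping $\sum_{s\ge1}\delta^{s}g_{t+s}$ therefore makes the right-hand side larger. Hence \eqref{eq:fl} implies the sharper condition, and the true escape margin is weakly smaller.

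The main obstacle I expect is the continuation comparison. It needs two things: that the deviation path really stays weakly below the on-path sequence after truncation, and that concavity is applied in the right direction so that $W'$ is evaluated at $O_{t+s}$. I would handle the first by coupling the two paths period by period and checking the cap case separately.
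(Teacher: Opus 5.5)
Your proposal is correct and follows essentially the same route as the paper: the period-of-deviation change from \eqref{eqA:period}, a continuation security gain bounded above by the first-order concavity term $W'(O_{t+s})$ times a wedge that only shrinks at the cap, the relational loss $\sum_{s\ge1}\delta^{s}g_{t+s}$ subtracted, and then the monotone replacements $W'(O_{t+s})\le\overline W'$ and $g_{t+s}\ge0$ to recover \eqref{eq:fl}. You spell out more explicitly than the paper's proof the coupling of the two paths, the direction in which concavity is applied, and the check that task payoffs from $t+1$ onward are unchanged, but the argument is the same.
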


\begin{proof}
Immediate from \eqref{eqA:period} and the continuation accounting: the discounted relational loss is $\sum_{s\ge1}\delta^{s}g_{t+s}$, and the discounted security gain is at most $\frac{\alpha+\gamma}{2}\sum_{s\ge1}\delta^{s}W'(O_{t+s})$, because concavity makes the first-order term an upper bound and the wedge only shrinks once the path reaches $D_{\max}$. The displayed inequality is therefore sufficient for the one-shot deviation to be unprofitable, and it is exact to first order in $\alpha+\gamma$ while the wedge persists. The bounds $W'(O_{t+s})\le\overline W'$ and $g_{t+s}\ge0$ give \eqref{eq:fl}.
\end{proof}

\subsection{Micro-founding the relational benefit}\label{app:micro}

This subsection derives $g(D)=\rho D$ from an explicit continuation, formalizing Remark~\ref{rem:microfound}. Append to each stage a \emph{renewal subgame}: after tasks are completed, the relationship either continues (favorable: renewal, good reference, retained goodwill) or ends unfavorably. Let the probability of the favorable outcome depend on the worker's communication style,
\[
\Pr(\text{favorable}\mid A)=p_A,\qquad \Pr(\text{favorable}\mid B)=p_B,\qquad \Delta p\equiv p_A-p_B>0 ,
\]
so accommodation raises the chance of a favorable continuation. Let the worker's loss from the unfavorable relative to the favorable outcome be
\[
L(D)\;=\;\big(\text{value if relationship continues}\big)-\big(\text{value of falling back on the outside option }O(D)\big),
\]
which is increasing in $D$ because a weaker outside option makes the fallback worse; take the linear case $L(D)=\ell\,D$ with $\ell>0$.

\begin{propA}[Reduced form of the relational benefit]\label{propA:micro}
The expected continuation payoff of message $m$ is $-(1-p_m)L(D)$ up to a type-and-message-independent constant. Hence the relational benefit of $A$ over $B$ is
\[
g(D)=\big[-(1-p_A)L(D)\big]-\big[-(1-p_B)L(D)\big]=(p_A-p_B)\,L(D)=\Delta p\,\ell\,D ,
\]
which is exactly $\rho D$ with $\rho=\Delta p\,\ell>0$. More generally, for any increasing $L(\cdot)$ the relational benefit $g(D)=\Delta p\,L(D)$ is increasing in $D$, so the single-crossing property of $\Phi$ in \eqref{eq:netgain} survives; $\rho D$ is the linear specialization.
\end{propA}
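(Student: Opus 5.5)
The plan is to compute the expected continuation payoff of each message directly from the renewal subgame, then take differences so that every term not depending on the message cancels.

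\emph{Step 1: expected continuation value of a message.} Fix the worker's type $D$ and message $m\in\{A,B\}$. Write $U_c$ for the value to the worker of a favorable continuation. The value of an unfavorable end is then $U_c-L(D)$, by the definition of $L$ as the gap between continuing and falling back on $O(D)$. The expected continuation payoff is
\[
p_m U_c+(1-p_m)\big(U_c-L(D)\big)=U_c-(1-p_m)L(D).
\]
Here I will treat $U_c$ as independent of the message. The supervisor's renewal decision, as modeled, enters only through $p_m$, and the task payoffs are accounted for separately in \eqref{eq:netgain}. Whether $U_c$ depends on type does not matter, because it cancels in the next step. This gives the first claim: the continuation payoff equals $-(1-p_m)L(D)$ up to a constant that does not depend on the message.

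\emph{Step 2: difference across messages.} Subtracting the $B$ value from the $A$ value removes the constant $U_c$ and leaves
\[
g(D)=(p_A-p_B)L(D)=\Delta p\,L(D).
\]
In the linear case $L(D)=\ell D$ this is $\Delta p\,\ell\,D$, so $\rho=\Delta p\,\ell$. Since $\Delta p>0$ and $\ell>0$, we have $\rho>0$.

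\emph{Step 3: general increasing $L$.} Because $\Delta p>0$, $g$ is strictly increasing exactly when $L$ is. The net gain then becomes $\Phi(D)=\Delta p\,L(D)-\Delta q$, which is strictly increasing in $D$. This is the single-crossing property, and it is the only feature of $g$ that the threshold argument of Proposition~\ref{prop:dep} and Proposition~\ref{propA:regimes} actually uses. The cutoff $D^{*}$ is then defined implicitly by $\Delta p\,L(D^{*})=\Delta q$ rather than by the closed form $\Delta/2\rho$.

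The only delicate point is making sure the constant $U_c$ really is independent of the message. If the continuation value depended on $m$, the messages would have to be redefined so that this effect is absorbed into $p_m$, which is what the statement's ``up to a constant'' implicitly assumes.
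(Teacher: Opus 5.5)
Your proposal is correct and follows the paper's proof step for step: you write the continuation under message $m$ as $p_m\bar V+(1-p_m)\big(\bar V-L(D)\big)=\bar V-(1-p_m)L(D)$, difference across $m=A,B$ so the constant cancels, substitute $L(D)=\ell D$ to get $\rho=\Delta p\,\ell$, and read monotonicity for general increasing $L$ off $\Delta p>0$. The one difference is that the paper normalizes the favorable continuation value $\bar V$ to be independent of both $D$ and $m$, and this is what makes the constant in the statement's first sentence type-independent. Allowing a type-dependent $U_c$ is harmless for $g(D)$, but taken literally it only gives a message-independent constant, so you should adopt the same normalization to obtain the first claim exactly as stated.
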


\begin{proof}
Normalize the favorable-outcome continuation value to a constant $\bar V$ (independent of $D$ and $m$ given the relationship continues) and the unfavorable continuation to $\bar V-L(D)$. Then the expected continuation under message $m$ is $p_m\bar V+(1-p_m)(\bar V-L(D))=\bar V-(1-p_m)L(D)$. Differencing across $m=A,B$ removes $\bar V$ and yields $g(D)=(p_A-p_B)L(D)$. Substituting $L(D)=\ell D$ gives $g(D)=\Delta p\,\ell\,D=\rho D$. Monotonicity for general increasing $L$ is immediate since $\Delta p>0$.
\end{proof}

\noindent Thus the body's primitive $g(D)=\rho D$ is the linear case of a renewal subgame in which (a) accommodation raises the favorable-continuation probability by $\Delta p$, and (b) the cost of an unfavorable continuation rises with dependence. Both ingredients are the same single force used throughout: dependence raises the cost of friction with the supervisor.

\subsection{Existence with a continuum of types}\label{app:cont}

Let $D\sim F$, atomless, on $[\underline D,\overline D]$, i.i.d.\ across the two workers, and conjecture a cutoff strategy: send $A$ iff $D\ge x$. The following result completes Remark~\ref{rem:general}.

\begin{propA}[Continuum cutoff]\label{propA:cont}
\emph{(a) Two workers, fair tie-break.} If both workers use cutoff $x$ and the supervisor breaks ties fairly, then $q=\tfrac12$ independently of $x$ (Lemma~\ref{lemA:q} with $p=1-F(x)$, $\tau=\tfrac12$), so the indifferent type solves $\rho x=\Delta\,q=\tfrac{\Delta}{2}$, giving the unique cutoff $x^{*}=\dfrac{\Delta}{2\rho}=D^{*}$. An interior separating equilibrium exists iff $\underline D<D^{*}<\overline D$.

\emph{(b) Endogenous assignment sensitivity.} If the supervisor can condition on a finer signal so that the assignment sensitivity $q$ depends on the population cutoff, $q=q(x)$ continuous in $x$, then a cutoff equilibrium solves the fixed point $\rho x=\Delta\,q(x)$. The best-response map $T(x)=\Delta\,q(x)/\rho$ is a continuous self-map of the compact interval $[\underline D,\overline D]$ whenever $T(x)\in[\underline D,\overline D]$, so by Brouwer's theorem a fixed point $x^{*}=T(x^{*})$ exists; it is the equilibrium cutoff. Strict monotonicity of $\Phi(D)=\rho D-\Delta q$ in $D$ makes the implied strategy a genuine cutoff rule.
\end{propA}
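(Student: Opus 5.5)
Part (a) is a specialization of Lemma~\ref{lemA:q}, and I would start there. If both workers use the cutoff $x$, the opponent sends $A$ with probability $p=1-F(x)$. The supervisor still ranks posteriors by message: an $A$ sender is known to have $D\ge x$, a $B$ sender to have $D<x$, so the $A$ sender has the higher posterior dependence. Because $r$ is decreasing, Lemma~\ref{lem:super} then assigns $N$ to the $A$ sender on differing profiles and flips a fair coin on ties. Lemma~\ref{lemA:q} with $\tau=\tfrac12$ gives $q=\tfrac12$ for every $p$, hence for every $x$. The incentive \eqref{eqA:incentive} becomes $\rho D-\Delta/2$, strictly increasing in $D$, so the best response to any cutoff is the cutoff $D^{*}=\Delta/2\rho$. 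The equilibrium cutoff is therefore unique and equal to $D^{*}$.

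For the ``iff'' I would argue two directions. If $\underline D<D^{*}<\overline D$, both message sets have positive mass because $F$ is atomless with support $[\underline D,\overline D]$. Beliefs are then Bayesian on both messages and the construction is a PBE. Conversely, if $D^{*}\notin(\underline D,\overline D)$, the best response sends almost every type to the same message, so any equilibrium is pooling and no interior separating one exists. The boundary types are measure zero, so their behaviour is irrelevant.

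For part (b), I would take $q(\cdot)$ continuous as given and define $T(x)=\Delta q(x)/\rho$. Under the stated hypothesis $T([\underline D,\overline D])\subseteq[\underline D,\overline D]$, $T$ is a continuous self-map of a compact interval. The one-dimensional Brouwer theorem, or just the intermediate value theorem applied to $T(x)-x$, which is $\ge0$ at $\underline D$ and $\le0$ at $\overline D$, yields a fixed point $x^{*}$. At $x^{*}$, $\Phi(D)=\rho D-\Delta q(x^{*})$ is strictly increasing and vanishes at $D=x^{*}$. So types above $x^{*}$ strictly prefer $A$ and types below strictly prefer $B$, and the population cutoff $x^{*}$ is a best response to itself.

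The main obstacle is not the fixed point but consistency in (b). One must check that the supervisor's finer-signal rule generating $q(x)$ remains sequentially rational given beliefs induced by cutoff $x^{*}$; in the statement this is essentially assumed through the posited $q(\cdot)$. I would flag that the supervisor's optimality is inherited from Lemma~\ref{lem:super} applied to the posterior ranking, and that in (a) the fair tie-break is what makes $q$ constant.
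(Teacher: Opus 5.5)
Your proposal is correct and follows essentially the same route as the paper: part (a) specializes the general tie-break lemma with $\tau=\tfrac12$ to get $q=\tfrac12$ for every cutoff and reads off the unique indifference root $D^{*}=\Delta/2\rho$, and part (b) applies Brouwer (your intermediate-value variant is the same one-dimensional fact) together with single crossing of $\Phi$. Two of your additions are points the paper leaves implicit: the two-direction argument for the ``iff'', which correctly relies on $F$ having full support on $[\underline D,\overline D]$, and your note that the supervisor's sequential rationality under the finer signal in (b) is assumed rather than proved.
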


\begin{proof}
\emph{(a)} With both workers above-cutoff sending $A$, two $A$-senders share the truncated posterior $F(\cdot\mid D\ge x)$ and are indistinguishable to the supervisor, so they tie and assign $N$ by fair coin; Lemma~\ref{lemA:q} then gives $q=\tfrac12$ for every $x$. The indifference condition $\rho x-\Delta q=0$ becomes $\rho x=\tfrac{\Delta}{2}$, with unique root $D^{*}$; types above accommodate (since $\Phi$ is increasing) and types below do not. Interiority requires $\underline D<D^{*}<\overline D$.

\emph{(b)} $T$ is continuous because $q(\cdot)$ is, and it maps the compact convex set $[\underline D,\overline D]$ into itself by hypothesis; Brouwer yields $x^{*}=T(x^{*})$, i.e.\ $\rho x^{*}=\Delta q(x^{*})$. At $x^{*}$ a worker is indifferent; by single-crossing of $\Phi$ all higher types strictly prefer $A$ and all lower types strictly prefer $B$, confirming the cutoff structure.
\end{proof}

\noindent Part (a) shows that within the two-worker model the continuum cutoff is exactly the binary threshold $D^{*}$; part (b) records that, once the supervisor reads finer information so that $q$ becomes an equilibrium object, existence is a standard fixed-point argument.

\end{document}